\documentclass[11pt]{article}

\usepackage[a4paper,margin=0.92in]{geometry}
\usepackage[T1]{fontenc}
\usepackage[utf8]{inputenc}
\usepackage{lmodern}
\usepackage{microtype}
\usepackage{amsmath,amssymb,amsthm,mathtools}
\usepackage{xcolor}
\usepackage{listings}
\usepackage{enumitem}
\usepackage{hyperref}
\usepackage[numbers,sort&compress]{natbib}

\hypersetup{colorlinks=true,linkcolor=blue!55!black,citecolor=green!40!black,urlcolor=blue!60!black}

\lstdefinestyle{defcode}{
  basicstyle=\ttfamily\footnotesize,
  keywordstyle=\bfseries\color{blue!60!black},
  commentstyle=\itshape\color{green!40!black},
  stringstyle=\color{red!50!black},
  showstringspaces=false,
  frame=single,
  framerule=0.25pt,
  breaklines=true,
  columns=fullflexible,
  backgroundcolor=\color{gray!4},
  xleftmargin=0.5em,
  xrightmargin=0.5em
}
\newtheorem{definition}{Definition}

\newtheorem{lemma}{Lemma}
\newtheorem{theorem}{Theorem}
\newtheorem{corollary}{Corollary}

\newtheorem{remark}{Remark}
\newtheorem{proposition}{Proposition}

\newcommand{\Layers}{\mathcal{L}}
\newcommand{\Sig}{\Sigma}
\newcommand{\Gam}{\Gamma}

\newcommand{\Marks}{\mathcal{M}}
\newcommand{\Conc}{\mathcal{C}}

\newcommand{\Ord}{\mathsf{Ord}}
\newcommand{\Aord}{\mathcal{A}_{\mathsf{ord}}}
\newcommand{\Adec}{\mathcal{A}_{\mathsf{dec}}}
\newcommand{\T}{\mathcal{T}}
\newcommand{\E}{\mathcal{E}}
\newcommand{\Ssys}{\mathcal{S}}
\newcommand{\Lang}{\mathcal{L}}
\newcommand{\layer}{\operatorname{layer}}
\newcommand{\markproj}{\operatorname{mark}}
\newcommand{\obs}{\operatorname{obs}}
\newcommand{\Pref}{\operatorname{Pref}}

\newcommand{\Min}{\operatorname{Min}}

\newcommand{\eps}{\varepsilon}

\newcommand{\abstr}{\alpha}

\newcommand{\Viol}{\Phi}
\newcommand{\Adv}{\mathsf{Adv}}
\newcommand{\Rej}{\mathsf{Rej}}
\newcommand{\Allow}{\mathsf{Allow}}
\newcommand{\Deny}{\mathsf{Deny}}
\newcommand{\Chal}{\mathsf{Challenge}}
\newcommand{\Quar}{\mathsf{Quarantine}}
\newcommand{\mF}{m_{F}}
\newcommand{\mC}{m_{C}}
\newcommand{\mP}{m_{P}}
\newcommand{\mI}{m_{I}}
\newcommand{\mB}{m_{B}}
\newcommand{\mD}{m_{D}}
\newcommand{\mQ}{m_{Q}}
\newcommand{\mA}{m_{A}}

\newcommand{\BY}{\mathsf{BY}}
\newcommand{\SA}{\mathsf{SA}}
\newcommand{\Cert}{\mathsf{Cert}}
\newcommand{\Der}{\mathsf{Der}}

\title{\textbf{Layer Order Semantics for Automata-Based Cybersecurity}\\
\large Request Smuggling as a Layer-Order Phenomenon: Enforcement Limits, Edit-Automata Equivalence, and Canonical Reorder Congruence}

\author{
Faruk Alpay$^{*}$\\
Department of Computer Engineering\\
Bah\c{c}e\c{s}ehir University, Istanbul, Turkiye\\
\texttt{faruk.alpay@bahcesehir.edu.tr}
\and
Taylan Alpay\\
Department of Aerospace\\
University of Turkish Aeronautical Association, Ankara, Turkiye\\
\texttt{s220112602@stu.thk.edu.tr}
}

\date{$^{*}$Correspondence: \texttt{alpay@lightcap.ai}}

\begin{document}
\maketitle

\begin{abstract}
Layered cybersecurity pipelines transform evidence before they decide on it, and the order of those transformations determines which security facts become visible to which layer. This paper gives layer order a finite-state semantics built from a layer-order automaton, deterministic sequential security transducers, evidence markers, and a final decision automaton. The worked case is HTTP request desynchronization: front-end and back-end processors compute incompatible request boundaries, and the same trace is detected or missed according to whether framing evidence reaches the parser-differential layer before it commits. The results separate completed-trace recognition, online editing, decision synthesis, and faithful enforcement; characterize faithful online enforcement as the regular prefix-closed case under causal visibility; and show that regular policies beyond that boundary remain recognizable without becoming deployable enforcers. The framework is monolithically equivalent to finite-output deterministic edit automata, while preserving layer-local invariants such as marker birth, marker survival, and reorder-sensitive visibility. A concrete parser-pair semantics identifies the forbidden marker factor with CL.TE, TE.CL, TE.TE, and HTTP/2-downgrade boundary disagreement under the stated abstraction, and a contextual reorder congruence classifies which component permutations induce the same decision language. The result is an automata-theoretic account of order-sensitive security failures and a compositional vocabulary for auditing, synthesizing, and comparing layered enforcement pipelines.
\end{abstract}

\section*{Release certificate and share-alike markers}
This manuscript is distributed under the Creative Commons Attribution-ShareAlike 4.0 International license. The cited public certificate record \citep{alpay260607884} is used as a release marker for the share-alike semantics of this manuscript: attribution is modeled as a persistent provenance marker, share-alike propagation is modeled as a derivative-closure condition, and reuse is admissible only when both markers survive into the public derivative trace.

\begin{definition}[Release-marker discipline]
Let
\[
\mathcal{R}=\{\BY,\SA,\Cert\}
\]
be the release-marker alphabet. A public derivative trace is a word
\[
d\in(\Gam\cup\mathcal{R})^{*}.
\]
It is certificate-admissible when its release projection contains \(\Cert\), contains \(\BY\) before any public use marker, and contains \(\SA\) before every derivative publication marker \(\Der\).
\end{definition}

\begin{proposition}[Share-alike preservation]
If every derivative-processing layer is marker-monotone over \(\mathcal{R}\), then certificate admissibility is preserved by ordered composition.
\end{proposition}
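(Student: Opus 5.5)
We first fix the notions the statement leaves informal. A derivative-processing layer is modeled as a (possibly partial) map \(T\colon(\Gam\cup\mathcal{R})^{*}\to(\Gam\cup\mathcal{R})^{*}\). We call it \emph{marker-monotone over \(\mathcal{R}\)} when it neither deletes nor reorders release markers relative to the events they guard. Concretely, let \(\pi\) be the projection onto \(\mathcal{R}\cup\{\text{use},\Der\}\), where ``use'' is the public-use marker class. Then two conditions must hold: (i) \(\pi(d)\) is a subword of \(\pi(T(d))\) with every occurrence of \(\BY,\SA,\Cert\) in \(\pi(d)\) retained; and (ii) every use or \(\Der\) marker that \(T\) emits or carries over is preceded in \(\pi(T(d))\) by a \(\BY\) (respectively \(\SA\)) that is no later than it was in \(d\). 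The natural formal version of (ii) is that \(T\) only inserts new use or \(\Der\) markers at positions after the already-emitted release markers. We define ordered composition as \(T_k\circ\cdots\circ T_1\). The argument proceeds by induction on \(k\), so the core work is the single-layer case.

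\textbf{Single layer.} Assume \(d\) is certificate-admissible and \(T\) is marker-monotone; we show \(T(d)\) is admissible, taking the three clauses in turn.
\begin{itemize}[nosep]
\item \emph{\(\Cert\).} \(\Cert\) occurs in \(\pi(d)\), and (i) keeps every release marker, so \(\Cert\) occurs in \(\pi(T(d))\).
\item \emph{\(\BY\) before use.} Fix a public-use marker in \(T(d)\). If it is inherited from \(d\), some \(\BY\) preceded it in \(d\). That \(\BY\) survives by (i), and order preservation in the subword embedding keeps it before the use marker. If the use marker is newly inserted, (ii) places it after an existing \(\BY\).
\item \emph{\(\SA\) before each \(\Der\).} This is the same argument as the second clause, with \(\SA\) in place of \(\BY\) and \(\Der\) in place of use.
\end{itemize}
Admissibility is a conjunction of an existence clause and two ``precedence'' clauses. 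Each precedence clause is a universally quantified property of the form ``every occurrence of \(b\) has some earlier \(a\)'', and an order-preserving embedding with no deleted \(a\)'s carries such properties forward.

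\textbf{Composition.} Marker-monotone maps are closed under composition: subword embeddings compose, and the insertion discipline (ii) is stable under composition. Admissibility then propagates layer by layer by induction. Equivalently, this is the same survival-and-birth argument the paper uses for marker survival under layer composition. It also transfers to any layer order chosen by the layer-order automaton, since the argument never uses which layer comes first.

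\textbf{Main obstacle.} The hard part is the definitional step: making ``marker-monotone'' strong enough to rule out a layer that keeps all markers but emits a fresh \(\Der\) before any \(\SA\). Retention of \(\mathcal{R}\)-markers alone (condition (i)) would not suffice, which is why the plan builds the causal insertion condition (ii) into monotonicity. If the paper's intended definition is only (i), I would instead restrict to layers whose output on use and \(\Der\) markers is a subsequence of their input, so that no such markers are born. That is the layer-local ``no marker birth'' invariant mentioned in the abstract, and under it only the inherited case of each precedence clause arises.
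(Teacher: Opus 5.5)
Your proof follows the paper's route. $\BY$, $\SA$ and $\Cert$ survive each layer through an order-preserving subsequence embedding, and admissibility is pushed through the chain one layer at a time. You are more careful than the paper at the one step where it is loose. The paper simply asserts that the admissibility clauses are ``subsequence constraints over persistent markers.'' Your obstacle paragraph correctly explains why retaining $\mathcal{R}$-markers does not give this. The guarded symbols (public-use markers and $\Der$) lie outside $\mathcal{R}$, and a universal clause such as ``every $\Der$ has an earlier $\SA$'' is not preserved when passing to supersequences.

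Concretely, take the sequential transducer that prefixes a fresh $\Der$ to its first output block and copies everything else. On every word its output has the same $\mathcal{R}$-projection as its input. Yet it maps the admissible trace $\Cert\,\BY\,\SA$ to the inadmissible $\Der\,\Cert\,\BY\,\SA$. So under the literal reading the proposition is false, and a guarded-insertion or no-birth hypothesis like yours is necessary rather than a convenience. With that hypothesis, your case analysis and induction are correct.

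Two refinements:
\begin{itemize}
\item Your ``natural formal version'' of (ii), inserting only after the already-emitted release markers, does not guarantee that any $\BY$ (resp.\ $\SA$) has been emitted yet. State instead the condition your bullets actually use: every inserted use (resp.\ $\Der$) marker is preceded in $T(d)$ by some $\BY$ (resp.\ $\SA$).
\item In the paper's model, layers are causal sequential transducers and monotonicity is required on every input word, hence on every prefix. There your no-birth fallback works even with plain $\mathcal{R}$-retention, without tracking the joint projection. A $\Der$ emitted while reading position $j$ forces a $\Der$ in the input at or before $j$. The $\SA$ guarding that input $\Der$ lies strictly earlier, so it was already emitted on the shorter prefix.
\end{itemize}
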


\begin{proof}
Marker monotonicity gives subsequence preservation of \(\BY\), \(\SA\), and \(\Cert\) through every derivative-processing layer. The order constraints in the certificate-admissibility definition are subsequence constraints over persistent markers. Ordered composition therefore cannot erase the attribution marker, cannot erase the share-alike marker, and cannot erase the certificate marker. Hence every accepted derivative trace remains certificate-admissible.
\end{proof}

\section{Introduction}

A modern web request is parsed by a chain of HTTP processors---an edge proxy or content delivery network, possibly a cache, a load balancer, an application server---and each of them independently decides where one request ends and the next begins. HTTP request smuggling, also called HTTP desynchronization, is the attack that lives in the gap between these decisions. It was first documented by Linhart, Klein, Heled, and Orrin in the 2005 Watchfire whitepaper \citep{watchfire2005smuggling}, dismissed for over a decade, and revived by Kettle's 2019 \emph{HTTP Desync Attacks} research, which weaponized it against production infrastructure including major providers and codified the now-standard CL.TE, TE.CL, and TE.TE taxonomy of front-end/back-end disagreement \citep{kettle2019desync}. It was then escalated by HTTP/2 downgrade desync, where an HTTP/2 front-end rewriting requests for an HTTP/1.1 back-end reintroduces exactly the boundary ambiguity that HTTP/2's binary framing was supposed to remove \citep{kettle2021http2,lerner2021http2}, and extended to victim browsers as client-side desync, with live disclosures against shipped proxies receiving CVE identifiers \citep{kettle2022browser}.

The decisive structural fact is compositional. The front-end honors one framing rule, such as \texttt{Content-Length}; the back-end honors another, such as \texttt{Transfer-Encoding}; each behavior may be defensible under some reading of the standard \citep{rfc9112,cwe444}. The vulnerability is created by the \emph{order and composition} of the components, and by the fact that a later layer reacts only to evidence that earlier layers chose to forward. This is the phenomenon the present paper formalizes. Cybersecurity pipelines are ordered machines: an edge proxy observes source and size classes before an HTTP framer reconciles request-boundary evidence; a canonicalizer rewrites syntax before an application parser assigns route classes; identity middleware reconstructs sessions before behavior analytics assigns low-and-slow or credential-abuse markers \citep{mitre_t1110_004}; a policy engine decides only over the evidence that survived the preceding chain. This order is semantic, and request smuggling is the proof that it is.

\emph{Layer order semantics} is the semantics of that ordered chain. The framework treats local security components as sequential transducers, treats evidence as finite markers, treats permissible layer movement as a regular order language, and treats the final security judgment as a deterministic decision automaton. The same component library may recognize one security language in one order and another security language in another order---which is to say, the same proxy and back-end may parse a request consistently in one arrangement and desynchronize in another.

The theory sits at the intersection of security automata, edit automata, runtime verification, and a concrete family of documented attacks. Schneider's security automata identify monitorable safety policies over executions \citep{schneider2000enforceable}. Ligatti, Bauer, and Walker show that edit automata can suppress, insert, and rewrite events while enforcing run-time policies \citep{ligatti2005edit}. The computability hierarchy for enforcement mechanisms fixes the distinction between recognition and enforcement \citep{hamlen2006computability}. Runtime verification gives the operational reading of finite-state monitoring \citep{barringer2004rule,leucker2009brief,bauer2011runtime}. Against that formal background, request desynchronization, inconsistent framing, parser differentials, canonicalization drift, and credential-abuse monitoring supply the documented attack classes whose order-sensitivity the formal results below characterize \citep{watchfire2005smuggling,kettle2019desync,kettle2021http2,owasp_request_smuggling}.

The paper makes a direct mathematical commitment. Layer order semantics is exactly regular as a language-recognition device; its contribution is a classification of ordered security transformations: which policies can be recognized, which can be edited, which decisions can be synthesized, which properties can be faithfully enforced under causal visibility, which layer permutations are semantically equivalent, and which abstraction maps transfer concrete adversarial coverage. The thesis that organizes these results is that a desynchronization attack is precisely a witness that two layer orders recognize different languages, and that the line between a pipeline that is safe and one that is exploitable is the line the enforceability theorem draws.

\paragraph{Contributions.}
The paper proves the following results.

\begin{enumerate}[leftmargin=*]
    \item A complete automata system for layer order semantics: layer-order automata, sequential security transducers, evidence-marker projections, decision automata, and compiled recognizers.
    \item An exact enforceability theorem: under causal visibility, total termination, transparency, and soundness, the faithfully enforceable abstract policies are precisely the regular prefix-closed languages; every such policy has a maximally permissive layered enforcer.
    \item A strict recognition-versus-enforcement separation: a regular marker policy can be recognized by the final decision automaton and still fail faithful online enforcement under the same causal visibility axiom.
    \item An edit-automata comparison theorem: deterministic finite-output edit automata and compiled layer-order chains are mutually representable up to observable behavior.
    \item A worked finite-state model of HTTP request desynchronization as an abstract trace class with concrete-to-abstract mapping, marker-generation transducers, forbidden-marker policy, synthesized decision automaton, and a separating trace between two orders---together with a grounding section that maps the model onto the documented CL.TE, TE.CL, TE.TE, HTTP/2-downgrade, and client-side desync families, supplies a concrete deterministic parser-pair semantics, and proves the abstract forbidden factor equivalent under that semantics to front-end/back-end boundary disagreement, with detection, exploitability, and impact separated as successively stronger conditions.
    \item A three-level reorder theory: syntactic commutation, decision-congruence commutation, and full contextual equivalence; with algorithms, termination, finite-witness generation, and complexity bounds.
\end{enumerate}

\section{Automata system}

\begin{definition}[Layer vocabulary]
The layer vocabulary is a finite set
\[
\Layers=\{E,F,C,P,I,B,R\}.
\]
The symbols denote edge observation, HTTP framing, canonicalization, application parsing, identity/session reconstruction, behavior aggregation, and response/policy action.
\end{definition}

\begin{definition}[Security event alphabet]
Let $Ch$ be finite channels, $Att$ finite attributes, and $Val$ finite abstract values. The event alphabet is
\[
\Sig\subseteq \Layers\times Ch\times Att\times Val.
\]
An event $(\ell,c,a,v)$ records value $v$ for attribute $a$ on channel $c$ at layer $\ell$.
\end{definition}

\begin{definition}[Evidence markers and decisions]
The marker alphabet is finite:
\[
\Marks=\{\mF,\mC,\mP,\mI,\mB,\mD,\mQ,\mA\}.
\]
The readings are framing discrepancy, canonicalization drift, parser differential, identity risk, behavior risk, denial, quarantine, and audit. The output alphabet is
\[
\Gam=\Sig\cup\Marks\cup\{\Allow,\Deny,\Chal,\Quar\}.
\]
\end{definition}

\begin{definition}[Layer projection]
For $e=(\ell,c,a,v)\in\Sig$, put $\layer(e)=\ell$. Extend this homomorphically to words in $\Sig^{*}$. Markers and decisions are invisible to $\layer$.
\end{definition}

\begin{definition}[Layer-order automaton]
A layer-order automaton is a deterministic finite automaton
\[
\Aord=(Q_o,\Layers,q_o^0,\delta_o,F_o).
\]
Its language $\Ord\subseteq\Layers^{*}$ is the set of admissible layer sequences.
\end{definition}

A strict forward pipeline is given by
\[
\Ord_{\rightarrow}=E^{*}F^{*}C^{*}P^{*}I^{*}B^{*}R^{*}.
\]
A deployment with response review may instead use a different regular order language. The mathematics requires only regularity.

\begin{definition}[Sequential security transducer]
A deterministic sequential security transducer is a tuple
\[
\T=(Q,\Gam,q_0,\delta,\eta),
\]
where $Q$ is finite, $q_0\in Q$, $\delta:Q\times\Gam\to Q$, and $\eta:Q\times\Gam\to\Gam^{*}$. Its extension to words is defined by
\[
\T(\eps)=\eps,
\]
\[
\T(xa)=\T(x)\eta(\delta^{*}(q_0,x),a).
\]
\end{definition}

The model is causal. The output produced while reading symbol $a$ depends only on the current state and on $a$.

\begin{definition}[Edit primitives]
For input symbol $a$, the output $\eta(q,a)$ is classified as follows: copy if $\eta(q,a)=a$, suppress if $\eta(q,a)=\eps$, insert if $\eta(q,a)=ua$ or $au$ with $u\ne\eps$, rewrite if $\eta(q,a)=u$ and $u$ is not $a$ and not $\eps$.
\end{definition}

\begin{definition}[Layered automata system]
A layered automata system is
\[
\Ssys=(\Aord,\T_1,\ldots,\T_k,\Adec),
\]
where each $\T_i$ is a sequential security transducer and
\[
\Adec=(Q_d,\Gam,q_d^0,\delta_d,F_d)
\]
is a deterministic decision automaton.
\end{definition}

\begin{definition}[Layer order semantics]
For an input word $w\in\Sig^{*}$, define
\[
\T_{\pi}=\T_{\pi(k)}\circ\cdots\circ\T_{\pi(1)}
\]
for a permutation $\pi$ of layer components. The language recognized by arrangement $\pi$ is
\[
\Lang_{\pi}(\Ssys)=\{w\in\Sig^{*}:\layer(w)\in\Ord\text{ and }\Adec\text{ accepts }\T_{\pi}(w)\}.
\]
\end{definition}

\begin{definition}[Evidence projection]
The evidence projection $\markproj:\Gam^{*}\to\Marks^{*}$ erases non-marker symbols and preserves marker order.
\end{definition}

\section{Recognition, editing, decision synthesis, enforcement}

A layered system has four different semantics, and the results below keep them separate. Recognition classifies a completed input word after all transformations. Editing is the prefix-to-output behavior of the ordered chain. Decision synthesis constructs a finite acceptor from a marker policy. Enforcement is an online commitment: at each prefix the system must either emit what is still safe to emit or latch a denial before later evidence can repair an already unsafe prefix.

\begin{definition}[Recognition semantics]
The recognition language of arrangement \(\pi\) is \(\Lang_{\pi}(\Ssys)\). It is a completed-trace language: membership is evaluated after the entire edited word has been consumed by \(\Adec\).
\end{definition}

\begin{definition}[Editing semantics]
The edit behavior of arrangement \(\pi\) is the sequential function
\[
E_{\pi}=\T_{\pi(k)}\circ\cdots\circ\T_{\pi(1)}:\Gam^{*}\to\Gam^{*}.
\]
For every prefix \(x\), the word \(E_{\pi}(x)\) is the output already emitted after reading \(x\).
\end{definition}

\begin{definition}[Declarative marker policy]
A declarative marker policy is a regular language \(K\subseteq\Marks^{*}\). A word \(u\in\Gam^{*}\) is marker-safe when \(\markproj(u)\in K\).
\end{definition}

\begin{definition}[Causal visibility]
A layer has causal visibility when its output on a symbol depends only on its current finite state and on the current input symbol. In a chain, layer \(i\) at prefix \(x\) sees only the prefix emitted by layers \(1,\ldots,i-1\) on \(x\). It sees no future symbol and no marker that would be generated only by a later layer.
\end{definition}

\begin{definition}[Online enforcer]
An online enforcer for an abstract language \(P\subseteq\Sig^{*}\) is a sequential transducer
\[
M:\Sig^{*}\to (\Sig\cup\{\Deny\})^{*}
\]
with a latch state such that, after the first emitted \(\Deny\), no later input can produce an unlatched original event.
\end{definition}

\begin{definition}[Transparency]
An online enforcer \(M\) is transparent for \(P\) when \(M(w)=w\) for every \(w\in P\). Thus safe traces are neither suppressed, delayed, rewritten, nor decorated by a denial marker.
\end{definition}

\begin{definition}[Soundness]
Let \(\obs\) erase control symbols such as \(\Deny\). The enforcer \(M\) is sound for \(P\) when \(\obs(M(w))\in P\) for every input \(w\). Thus every observable trace produced by the enforcer satisfies the policy.
\end{definition}

\begin{definition}[Maximal permissiveness]
The enforcer \(M\) is maximally permissive for \(P\) when, at every prefix \(xa\), it copies \(a\) exactly when copying \(a\) keeps the current observable output inside \(\Pref(P)\). When copying would leave \(\Pref(P)\), it emits \(\Deny\) at that prefix and latches.
\end{definition}

\begin{definition}[Faithful enforceability]
A regular policy \(P\subseteq\Sig^{*}\) is faithfully enforceable by layer-order transducers when there exists a layered system whose compiled online edit behavior is causal, total on every finite prefix, transparent for \(P\), sound for \(P\), and maximally permissive for \(P\).
\end{definition}

\begin{definition}[Layer-local invariant]
A layer-local invariant is a predicate that refers to named component positions, marker birth positions, causal visibility, and marker preservation through suffix layers. Two chains may have the same monolithic edit function while satisfying different layer-local invariants.
\end{definition}

\section{Composition and regularity}

\begin{lemma}[Sequential transducers compose]\label{lem:compose}
The composition of two deterministic sequential security transducers is a deterministic sequential security transducer.
\end{lemma}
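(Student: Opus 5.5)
The plan is the standard product construction for sequential (subsequential-free) transducers. Let \(\T_1=(Q_1,\Gam,q_1^0,\delta_1,\eta_1)\) and \(\T_2=(Q_2,\Gam,q_2^0,\delta_2,\eta_2)\), and write \(\delta_2^{*}\) and \(\T_2^{q}\) for the extended transition function and extended output of \(\T_2\) started in state \(q\). The latter is defined by \(\T_2^{q}(\eps)=\eps\) and \(\T_2^{q}(ua)=\T_2^{q}(u)\,\eta_2(\delta_2^{*}(q,u),a)\). I will define
\[
\T=(Q_1\times Q_2,\Gam,(q_1^0,q_2^0),\delta,\eta),
\]
with
\[
\delta((p,q),a)=\bigl(\delta_1(p,a),\,\delta_2^{*}(q,\eta_1(p,a))\bigr),\qquad
\eta((p,q),a)=\T_2^{q}(\eta_1(p,a)).
\]
Since \(\eta_1(p,a)\in\Gam^{*}\) is a finite word and \(\T_2\) is total, both maps are well defined. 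The state set \(Q_1\times Q_2\) is finite, so \(\T\) is a deterministic sequential security transducer in the sense of the definition. Here \(\T\) is meant to compute \(\T_2\circ\T_1\).

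The key step is to prove by induction on \(x\in\Gam^{*}\) the joint invariant
\[
\delta^{*}((q_1^0,q_2^0),x)=\bigl(\delta_1^{*}(q_1^0,x),\,\delta_2^{*}(q_2^0,\T_1(x))\bigr)
\quad\text{and}\quad
\T(x)=\T_2(\T_1(x)).
\]
The base case \(x=\eps\) is immediate. For the step \(xa\), put \(p=\delta_1^{*}(q_1^0,x)\) and \(q=\delta_2^{*}(q_2^0,\T_1(x))\). Then \(\T_1(xa)=\T_1(x)\eta_1(p,a)\) by the definition of the extension. The step therefore reduces to two concatenation facts for \(\T_2\), where \(q\) is the state of \(\T_2\) after reading \(u\) from \(q_2^0\):
\[
\delta_2^{*}(q_2^0,uv)=\delta_2^{*}(q,v)
\quad\text{and}\quad
\T_2(uv)=\T_2(u)\,\T_2^{q}(v).
\]
Both follow by a separate routine induction on \(v\) from the defining recursion.

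The main obstacle, mild as it is, is this concatenation lemma. The paper defines the extension only one symbol at a time and only from \(q_0\). So I first need to introduce the state-parametrized extension \(\T_2^{q}\) and verify its homomorphic behaviour on concatenations, including the case where \(\eta_1(p,a)=\eps\) (suppression), for which the composite emits \(\eps\) and leaves \(\T_2\)'s state unchanged. Once that is in place, the induction closes directly. Causality is preserved, since \(\eta((p,q),a)\) depends only on the current product state and \(a\).
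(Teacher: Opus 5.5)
Your proposal is correct and follows the paper's proof: both use the product construction on $Q_1\times Q_2$, which runs $\T_2$ from its current state over the block $\eta_1(p,a)$, and both establish agreement with $\T_2\circ\T_1$ by induction on the input length. You go further by defining the state-parametrized extension $\T_2^{q}$ and proving the concatenation lemma for it. The paper leaves both implicit in its notation $\T_{2,r}(u)$ and its phrase ``by induction on $|w|$''.
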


\begin{proof}
Let $\T_1=(Q_1,\Gam,p_0,\delta_1,\eta_1)$ and $\T_2=(Q_2,\Gam,r_0,\delta_2,\eta_2)$. The composed state space is $Q_1\times Q_2$. Reading $a$ in state $(p,r)$, first compute $u=\eta_1(p,a)$ and $p'=\delta_1(p,a)$. Then run $\T_2$ from $r$ over $u$, producing output $v=\T_{2,r}(u)$ and new state $r'=\delta_2^{*}(r,u)$. Define
\[
\delta((p,r),a)=(p',r'),\qquad \eta((p,r),a)=v.
\]
The extension agrees with $\T_2(\T_1(w))$ by induction on $|w|$.
\end{proof}

\begin{theorem}[Compilation]\label{thm:compile}
Every layered automata system has an equivalent deterministic finite recognizer for its recognition language.
\end{theorem}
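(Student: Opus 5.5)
The plan is a product construction built on Lemma~\ref{lem:compose}. First, I apply Lemma~\ref{lem:compose} repeatedly, by induction on $k$, to collapse the chain $\T_{\pi(k)}\circ\cdots\circ\T_{\pi(1)}$ into a single deterministic sequential transducer
\[
\T_\pi=(Q,\Gam,q_0,\delta,\eta),\qquad Q=Q_{\pi(1)}\times\cdots\times Q_{\pi(k)} .
\]
Then $\T_\pi(w)$ is the concatenation of the finite outputs $\eta(\delta^{*}(q_0,x),a)$ over the prefixes $xa$ of $w$. This step also has to settle a typing point: $\T_\pi$ is applied to $w\in\Sig^{*}$, and since $\Sig\subseteq\Gam$ this just means restricting $\T_\pi$ to the subalphabet $\Sig$.

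Next I run the decision automaton on the emitted output. Define the automaton $B$ over $\Sig$ with states $Q\times Q_d$ and transition
\[
(q,s)\xrightarrow{a}\bigl(\delta(q,a),\,\delta_d^{*}(s,\eta(q,a))\bigr).
\]
This is well defined because $\eta(q,a)$ is a finite word in $\Gam^{*}$ and $\delta_d$ is total and deterministic. The main check is the invariant that after reading $x$, the state of $B$ is $(\delta^{*}(q_0,x),\,\delta_d^{*}(q_d^0,\T_\pi(x)))$. I prove this by induction on $|x|$, using the defining equation $\T(xa)=\T(x)\eta(\delta^{*}(q_0,x),a)$ together with the homomorphism property $\delta_d^{*}(s,uv)=\delta_d^{*}(\delta_d^{*}(s,u),v)$. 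It follows that $B$ with accepting set $Q\times F_d$ recognizes $\{w:\Adec\text{ accepts }\T_\pi(w)\}$.

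To impose the order constraint, I take a further product with $\Aord$ read through $\layer$. On event $a$, that component moves $q_o\mapsto\delta_o(q_o,\layer(a))$. Since $\layer$ is a letter-to-letter homomorphism on $\Sig$, the order component after $x$ is $\delta_o^{*}(q_o^0,\layer(x))$. The final recognizer therefore has state space $Q_o\times Q\times Q_d$, which is finite, and accepting set $F_o\times Q\times F_d$. By the two invariants it accepts exactly $\Lang_\pi(\Ssys)$. This recognizer is deterministic and finite, which is the claim; as a by-product, $\Lang_\pi(\Ssys)$ is regular.

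I expect the main obstacle to be bookkeeping rather than any conceptual difficulty. It lies in the invariant for $B$: the transducer may emit the empty word (suppress) or several symbols (insert) on a single input, so $\Adec$ must advance by the extended transition function rather than a single step, and the induction must be phrased over $\delta_d^{*}$. A smaller point is that the claimed recognizer is deterministic and complete, so I would note that totality of $\delta,\delta_o,\delta_d$ gives completeness. If $\delta_o$ is partial, I would add a sink state to it first.
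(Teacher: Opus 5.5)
Your proposal is correct and takes the same approach as the paper: compose the chain via Lemma~\ref{lem:compose}, then form the product of $\Aord$ (read through $\layer$), the compiled transducer state, and $\Adec$ advanced by $\delta_d^{*}$ over each output block, accepting exactly when both the order and decision components accept. Your induction invariant, and your remarks on empty or multi-symbol output blocks and on completeness, fill in details the paper leaves implicit.
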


\begin{proof}
By Lemma~\ref{lem:compose}, the ordered transducer chain compiles to one sequential transducer $\T_{\pi}$. The order automaton tracks $\layer(w)$ over input symbols. The decision automaton is simulated over the output blocks generated by $\T_{\pi}$. A compiled recognizer state consists of the order-automaton state, the compiled transducer state, and the decision-automaton state. On input $a$, update the order component by $\layer(a)$, compute the output block $u=\eta(q,a)$, and update the decision component by $\delta_d^{*}(r,u)$. Acceptance requires both order acceptance and decision acceptance.
\end{proof}

\begin{proposition}[Regular expressiveness]\label{prop:regular}
Layer order semantics recognizes exactly the regular languages over finite abstract traces.
\end{proposition}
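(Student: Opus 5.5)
The proof has two inclusions. The first is that every recognition language is regular; the second is that every regular language over \(\Sig\) is the recognition language of some layered automata system.

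\emph{Upper bound.} Fix a layered automata system \(\Ssys\) and an arrangement \(\pi\). Theorem~\ref{thm:compile} gives a deterministic finite recognizer for \(\Lang_{\pi}(\Ssys)\), so that language is regular. One point needs a check. The compiled transducer \(\T_{\pi}\) reads symbols of \(\Gam\supseteq\Sig\), while the recognizer runs only on inputs in \(\Sig^{*}\). Restricting the product automaton to input letters in \(\Sig\) keeps it finite and deterministic, so regularity over \(\Sig\) follows. Its state set is \(Q_o\times\prod_i Q_i\times Q_d\), which is finite because each \(\eta\) emits finite words and \(\delta_d^{*}\) is well defined on them.

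\emph{Lower bound.} Let \(L\subseteq\Sig^{*}\) be regular, with a complete DFA \(\mathcal{A}=(Q,\Sig,q^0,\delta,F)\). I would build a system with the simplest components.
\begin{itemize}[leftmargin=*]
\item Let \(\Aord\) be the one-state automaton accepting all of \(\Layers^{*}\), so the order filter is vacuous.
\item Take a single transducer \(\T_1\), the identity: one state, with \(\eta(q,a)=a\) for every \(a\in\Gam\).
\item Extend \(\mathcal{A}\) to a decision automaton \(\Adec\) over \(\Gam\) by adding a non-accepting sink. Every letter of \(\Gam\setminus\Sig\) goes to the sink, and the sink absorbs everything.
\end{itemize}
For \(w\in\Sig^{*}\) we then have \(\T_1(w)=w\), and \(\Adec\) accepts \(w\) exactly when \(\mathcal{A}\) does. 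Since \(k=1\), the only permutation is the identity, so \(\Lang_{\pi}(\Ssys)=L\). Alternatively, one can keep a nontrivial order language and intersect with it. This is unnecessary, because surjectivity onto the regular languages only needs one witness per language.

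\emph{Main obstacle.} The only delicate step is fixing what ``exactly the regular languages'' quantifies over. I would state it as: the class \(\{\Lang_{\pi}(\Ssys)\}\), as \(\Ssys\) ranges over layered automata systems with event alphabet \(\Sig\) and \(\pi\) over arrangements, equals the class of regular subsets of \(\Sig^{*}\). Two points must be checked for this reading.
\begin{itemize}[leftmargin=*]
\item The constraint \(\layer(w)\in\Ord\) never forces non-regularity. It cannot, because it is the inverse image of a regular set under a letter-to-letter homomorphism, and regular languages are closed under inverse homomorphisms.
\item The definitions allow the trivial order automaton and the identity transducer. They do, since both are finite deterministic devices of the required types.
\end{itemize}
Everything else follows directly from Lemma~\ref{lem:compose} and Theorem~\ref{thm:compile}.
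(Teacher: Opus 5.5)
Your proof is correct and follows the paper's own argument: regularity comes from Theorem~\ref{thm:compile}, and the converse uses a universal order automaton, a single identity transducer, and the given DFA as $\Adec$. Two details the paper leaves implicit are made explicit in yours: completing the DFA to the alphabet $\Gam$ with a rejecting sink, and restricting the compiled recognizer to input letters in $\Sig$.
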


\begin{proof}
Theorem~\ref{thm:compile} gives regularity. Conversely, any DFA for a regular language $L\subseteq\Sig^{*}$ is represented by a stack with one identity transducer, a universal order automaton, and the DFA as $\Adec$.
\end{proof}

\section{Exact enforceability class}

Recognition inspects the whole edited trace. Enforcement acts causally, so the two questions have different operational commitments.

\begin{definition}[Prefix-closed policy]
A language $P\subseteq\Sig^{*}$ is prefix-closed when
\[
w\in P\text{ and }u\preceq w \implies u\in P.
\]
Equivalently, once a finite prefix leaves $P$, no later extension restores it.
\end{definition}

\begin{theorem}[Exact abstract enforceability]\label{thm:exactenf}
Let $P\subseteq\Sig^{*}$ be regular. Under causal visibility, total termination on finite prefixes, transparency, soundness, and maximal permissiveness, $P$ is faithfully enforceable by a layer-order transducer chain if and only if $P$ is prefix-closed.
\end{theorem}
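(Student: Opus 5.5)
The proof splits into the two directions. The sufficiency direction is constructive: from a DFA for \(P\) we build a single layer that is a maximally permissive enforcer. The necessity direction is a short argument combining transparency, causality (output on a prefix is a prefix of output on any extension), and soundness.

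\emph{Sufficiency.} Suppose \(P\) is regular and prefix-closed. Then \(\Pref(P)=P\). Take a DFA \(A=(Q_P,\Sig,q^0,\delta_P,F_P)\) for \(P\), trimmed so that every non-final state is a sink. This is possible because \(P\) is prefix-closed: once a run leaves \(F_P\) it never returns.

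We define one sequential security transducer \(M\) with state set \(Q_P\cup\{\bot\}\), where \(\bot\) is the latch state. In state \(q\in F_P\) on input \(a\), there are two cases:
\begin{itemize}
\item if \(\delta_P(q,a)\in F_P\), move there and emit \(a\) (a copy primitive);
\item otherwise, move to \(\bot\) and emit \(\Deny\).
\end{itemize}
In \(\bot\), every input is suppressed.

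Placing \(M\) as the sole component, with the universal order automaton and any \(\Adec\) (as in Proposition~\ref{prop:regular}), gives a layered system. Its compiled edit behavior, by Lemma~\ref{lem:compose}, is just \(M\). We then check each requirement.
\begin{itemize}
\item \emph{Causality and totality} are immediate, since \(M\) is a total finite sequential transducer.
\item \emph{Transparency:} if \(w\in P\), every prefix of \(w\) lies in \(P\) by prefix-closure. So the run never hits \(\bot\) and \(M(w)=w\).
\item \emph{Soundness:} by induction, \(\obs(M(w))\) is the longest prefix of \(w\) that lies in \(P\). This prefix is in \(P\), and \(\varepsilon\in P\) whenever \(P\ne\emptyset\). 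The empty policy needs a side remark: if \(P=\emptyset\), then \(\varepsilon\notin P\) and no sound transducer exists. So I would note that the theorem tacitly takes \(P\ne\emptyset\), or equivalently prefix-closed and nonempty.
\item \emph{Maximal permissiveness} is the defining rule of \(M\), using \(\Pref(P)=P\).
\end{itemize}

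\emph{Necessity.} Let \(M\) be the compiled edit behavior of a faithful layered enforcer for \(P\). Suppose \(w\in P\) and \(u\preceq w\); we must show \(u\in P\). By transparency, \(M(w)=w\). By causal visibility and Lemma~\ref{lem:compose}, \(M\) is a sequential function, so \(M(u)\) is a prefix of \(M(w)=w\). Since \(M(u)\) is a prefix of \(w\in P\), it contains no \(\Deny\), and so \(\obs(M(u))=M(u)\). By soundness, \(M(u)\in P\).

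It remains to show \(M(u)=u\), not merely some prefix of \(u\). Maximal permissiveness supplies this. Induct along \(u\): at each step \(xa\preceq u\), the observable output so far is \(x\) (inductive hypothesis). Copying \(a\) yields \(xa\preceq w\in P\), hence \(xa\in\Pref(P)\), so the enforcer must copy \(a\). It cannot delay or rewrite. Hence \(M(u)=u\), and soundness gives \(u\in P\).

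\emph{Main obstacle.} The hardest point is this last step: ruling out buffering. Without maximal permissiveness, a transducer could withhold output on \(u\) and release it later, and the implication would fail. So the argument leans essentially on maximal permissiveness, together with causality, to force \(M(u)=u\) prefix by prefix. A secondary care point is interpreting "keeps output inside \(\Pref(P)\)" when \(P\) is not prefix-closed. The necessity direction only uses it along prefixes of words already in \(P\), where \(\Pref(P)\) membership is clear.
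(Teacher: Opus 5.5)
Your proof is correct. Its sufficiency half is the paper's construction: a complete DFA with absorbing rejecting states, and a single layer that copies while the run stays in $F$ and emits $\Deny$ and latches on the first exit.

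The necessity half shares the paper's skeleton but rests on a different key step. The paper argues by contradiction from $u\notin P$, $uv\in P$, and simply asserts that transparency on $uv$ forces the enforcer to have emitted $u$ after reading $u$. You instead derive $M(u)=u$ by induction along $u$ from maximal permissiveness. Yours is the step that actually holds under the stated definitions, because transparency only constrains $M(w)$ for $w\in P$. A causal enforcer may therefore buffer on prefixes outside $P$. For $P=\{\varepsilon,rc\}$, take the finite-state machine that emits nothing on an initial $r$, emits $rc$ on the following $c$, and otherwise emits $\Deny$ and latches. It is transparent and sound although $P$ is not prefix-closed, and it violates only maximal permissiveness.

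So that hypothesis is genuinely needed for the ``only if'' direction, as you anticipated. Your induction in fact makes transparency superfluous there: it gives $M(u)=u$ for every $u\in\Pref(P)$, and soundness then yields $\Pref(P)\subseteq P$. The paper's shorter argument would be valid only under a no-delay strengthening of transparency, which its informal gloss suggests but its definition does not state. The flip side is that your route establishes the theorem exactly as stated. It does not give the stronger transparency-and-soundness-only claim that the proof of Theorem~\ref{thm:recognize_not_enforce} invokes.

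Your caveat about $P=\varnothing$ is also right. It is regular and prefix-closed, yet $M(\varepsilon)=\varepsilon\notin P$ breaks soundness, and the paper's construction tacitly assumes $q_0\in F$.
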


\begin{proof}
Assume $P$ is regular and prefix-closed. Let $A_P=(Q,\Sig,q_0,\delta,F)$ be a complete DFA for $P$ with all rejecting states made absorbing by redirecting every transition from a rejecting state to a rejecting sink. Construct a one-layer transducer $M_P$ with state set $Q\cup\{\Rej\}$. In state $q\in F$, on input $a$, compute $q'=\delta(q,a)$. If $q'\in F$, output $a$ and move to $q'$. If $q'\notin F$, output $\Deny$ and move to $\Rej$. In state $\Rej$, output $\eps$ on every input and remain in $\Rej$. This machine is causal and terminates on every finite prefix. It is transparent on $P$, because every prefix of a word in $P$ remains accepting. It is sound because the observable output is always a prefix that stays inside $P$. It is maximally permissive because it copies exactly those symbols whose copy keeps the current observable trace inside $P$.

Conversely, let $M$ be any transparent and sound causal enforcer for $P$. Suppose $P$ is not prefix-closed. Then there exist $u\notin P$ and $v\in\Sig^{*}$ such that $uv\in P$. Transparency on $uv$ forces the enforcer to output $u$ after reading prefix $u$. Soundness on the input $u$ then requires the observable output after reading $u$ to belong to $P$. This contradicts $u\notin P$. Therefore $P$ is prefix-closed.
\end{proof}

\begin{corollary}[Marker-safety enforceability]\label{cor:markersafe}
Let $K\subseteq\Marks^{*}$ be regular and prefix-closed. The abstract marker policy
\[
P_K=\{w\in\Sig^{*}:\markproj(E_{\pi}(w))\in K\}
\]
is faithfully enforceable whenever $E_{\pi}$ is causal and marker generation is performed before the decision layer.
\end{corollary}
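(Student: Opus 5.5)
The plan is to reduce the corollary to the forward direction of Theorem~\ref{thm:exactenf}. The only thing to establish is that \(P_K\) is regular and prefix-closed. The theorem then supplies the one-layer enforcer \(M_P\), which is causal, total on finite prefixes, transparent, sound and maximally permissive. Appending that enforcer after the marker-generating chain gives the layered system the definition of faithful enforceability asks for.

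\emph{Regularity.} By Lemma~\ref{lem:compose}, \(E_{\pi}\) is a single deterministic sequential transducer \(\T_{\pi}=(Q,\Gam,q_0,\delta,\eta)\). Let \(A_K=(Q_K,\Marks,s_0,\delta_K,F_K)\) be a DFA for \(K\). I would form the product recognizer on states \(Q\times Q_K\), as in Theorem~\ref{thm:compile}. On input \(a\) in state \((q,s)\) it moves to
\[
\bigl(\delta(q,a),\ \delta_K^{*}(s,\markproj(\eta(q,a)))\bigr),
\]
and it accepts when the second component lies in \(F_K\). Since \(\markproj\) is a homomorphism, \(\markproj(E_{\pi}(w))\) is the concatenation of the projected output blocks, so this automaton accepts exactly \(P_K\).

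\emph{Prefix-closure.} Causality of \(E_{\pi}\), which follows from the recursive definition \(\T(xa)=\T(x)\eta(\cdot,a)\), gives monotonicity: if \(u\preceq w\) then \(E_{\pi}(u)\preceq E_{\pi}(w)\). Applying \(\markproj\) preserves the prefix relation, so \(\markproj(E_{\pi}(u))\preceq\markproj(E_{\pi}(w))\). If \(w\in P_K\), prefix-closure of \(K\) then yields \(u\in P_K\). The hypothesis that marker generation occurs before the decision layer is used here. It guarantees that the markers seen by the enforcer at prefix \(u\) are exactly those in \(E_{\pi}(u)\), so no marker produced by a later layer is required. This is the causal visibility condition.

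\emph{Assembly.} I would instantiate the construction in the proof of Theorem~\ref{thm:exactenf}, using the product DFA above in place of \(A_P\), with rejecting states made absorbing. The resulting transducer reads \(\Sig\) and runs \(\T_{\pi}\) internally to track marker evidence. It copies the original event \(a\) while the \(A_K\)-component stays accepting; otherwise it emits \(\Deny\) and latches. Because the simulation of \(\T_{\pi}\) lives inside the enforcer's own state, the enforcer's output is the original trace, not the edited trace. Transparency therefore holds for \(P_K\) as a policy over \(\Sig^{*}\).

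The main obstacle is exactly this distinction between the edited and the original trace. Naively placing the enforcer after \(E_{\pi}\) would make it act on \(E_{\pi}(w)\), and transparency (\(M(w)=w\)) could fail. I would resolve this by internal simulation: compose the marker chain into the enforcer's state via Lemma~\ref{lem:compose} while emitting input symbols unchanged. I would then check that the resulting machine is still a layer-order transducer chain in the sense required, with a one-component stack and the universal order automaton, as in Proposition~\ref{prop:regular}.
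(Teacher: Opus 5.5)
Your proof is correct and follows the paper's route: it shows \(P_K\) is regular via composition and projection, derives prefix-closure of \(P_K\) from prefix-closedness of \(K\) and the prefix-monotonicity of the compiled sequential transducer, and then invokes the forward construction of Theorem~\ref{thm:exactenf}; your explicit product DFA and your remark that the enforcer must simulate \(E_{\pi}\) internally, so that it emits the original trace, spell out what the paper's one-line appeal to that theorem leaves implicit. The one thing to fix is your opening sentence about appending the enforcer after the marker-generating chain: your own final paragraph correctly shows that this placement breaks transparency, and the one-layer internally simulating enforcer is exactly what Theorem~\ref{thm:exactenf} produces when applied to a DFA for \(P_K\).
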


\begin{proof}
The edited marker language is regular by sequential transducer composition and projection. Prefix-closedness of $K$ and causal marker generation make the induced bad-prefix set detectable at the first prefix whose marker projection leaves $K$. Apply Theorem~\ref{thm:exactenf} to $P_K$.
\end{proof}

\begin{theorem}[Recognition without faithful enforcement]\label{thm:recognize_not_enforce}
There exists a regular marker policy recognized by a final decision automaton but not faithfully enforceable under causal visibility, transparency, and soundness.
\end{theorem}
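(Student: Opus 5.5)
We will exhibit a concrete regular marker policy that is not prefix-closed, show that a final decision automaton recognizes it, and then use the converse direction of Theorem~\ref{thm:exactenf} to rule out faithful enforcement.

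\textbf{Choosing the policy.} Take the repair policy
\[
K=\{\eps\}\cup\mF\,\Marks^{*}\mA,
\]
which says: if a framing-discrepancy marker has been raised, an audit marker must eventually close the trace. This $K$ is regular but not prefix-closed, since $\mF\mA\in K$ while its prefix $\mF\notin K$.

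To realize $K$ on abstract traces, fix two events $e_F,e_A\in\Sig$. Use a one-layer marker generator $\T$ that copies each event and inserts $\mF$ after $e_F$ and $\mA$ after $e_A$. This is an insert primitive, so $\T$ is a sequential transducer. Let $\Adec$ be the DFA that reads $\markproj$ of its input and accepts exactly $K$. Take the universal order automaton. Then
\[
\Lang_{\pi}(\Ssys)=P_K=\{w:\markproj(\T(w))\in K\},
\]
and this language is regular by Theorem~\ref{thm:compile}. So the policy is recognized by the final decision automaton.

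\textbf{Ruling out faithful enforcement.} Observe that $e_F e_A\in P_K$ but $e_F\notin P_K$, so $P_K$ is not prefix-closed. The converse direction of Theorem~\ref{thm:exactenf} then gives the conclusion. Concretely, suppose a causal enforcer $M$ is transparent for $P_K$. Causality means the output after reading the prefix $e_F$ is the same whether or not $e_A$ follows. Transparency on $e_F e_A$ forces that output to be exactly $e_F$ with no $\Deny$. Then, on the input $e_F$ alone, the observable output is $e_F\notin P_K$, which contradicts soundness.

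\textbf{The delicate step.} The hard part is justifying the causality claim: the output on the prefix $e_F$ cannot depend on the later $e_A$. This rests on the recursive definition $\T(xa)=\T(x)\eta(\cdot,a)$, which gives prefix-monotonicity of sequential outputs, so $M(e_F)$ is a prefix of $M(e_F e_A)=e_F e_A$.

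A second subtlety is that $M(e_F)$ could be $\eps$, which would amount to delaying the event. That case must also be excluded. Transparency on the word $e_F$ does not apply, because $e_F\notin P_K$. So I will instead argue that delay violates the transparency condition as the paper phrases it, which forbids delaying safe traces, since $e_F$ is a prefix of a safe trace.

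If this reading is contested, I will fall back on the paper's proof of Theorem~\ref{thm:exactenf}. That proof already takes "transparency on $uv$ forces output $u$ after $u$" as given, and I will invoke it directly.
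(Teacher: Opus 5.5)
Your proposal is correct and follows the paper's route. Like the paper, you exhibit a regular policy that is not prefix-closed, observe that a final decision automaton recognizes it, and invoke the converse of Theorem~\ref{thm:exactenf}. The only real difference is the witness: you use a genuine marker policy $K\subseteq\Marks^{*}$, realized by a marker-inserting layer and a $\markproj$-reading $\Adec$, which matches the statement's wording more literally than the paper's direct obligation language over $\{r,c\}$.

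The delay subtlety you flag is real for the formal transparency definition. However, you need neither the informal gloss nor the step that the paper's proof of Theorem~\ref{thm:exactenf} takes for granted. Maximal permissiveness, which is part of faithful enforceability, forces $M$ to copy $e_F$ because $e_F\in\Pref(P_K)$, so soundness fails on input $e_F$. Even without maximal permissiveness, transparency on $e_F^{k}e_A$ and $e_F^{k+1}e_A$ together with soundness force $M(e_F^{k})=\eps$ for every $k\geq 1$. Finitely many states then cannot emit $e_F^{k}e_A$ on the final $e_A$ for all such $k$.
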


\begin{proof}
Let $\Sig=\{r,c\}$ where $r$ denotes an abstract risk marker event and $c$ denotes a later challenge event. Let
\[
P=\{w\in\Sig^{*}:\text{every occurrence of }r\text{ has a later occurrence of }c\}.
\]
This language is regular. A DFA recognizes it by remembering whether an unresolved $r$ has occurred and clearing that obligation on $c$. It is not prefix-closed, since $rc\in P$ but $r\notin P$. By Theorem~\ref{thm:exactenf}, no causal enforcer can be transparent and sound for $P$. Intuitively, if the enforcer copies $r$, the finite input $r$ is unsound; if it suppresses $r$, the valid trace $rc$ is not transparent.
\end{proof}

The theorem separates post-hoc trace recognition from online deployable enforcement. A final decision automaton may recognize obligations that are discharged later. A causal enforcer cannot both reveal the obligation immediately and remain transparent for traces where the discharge arrives later.

\section{Edit automata comparison}

\begin{definition}[Finite-output deterministic edit automaton]
A finite-output deterministic edit automaton is a tuple
\[
\E=(Q,\Gam,q_0,\delta,\eta)
\]
with $Q$ finite, $\delta:Q\times\Gam\to Q$, and $\eta:Q\times\Gam\to\Gam^{*}$. It reads one input symbol at a time and emits a finite output word.
\end{definition}

The definition is intentionally identical in transition shape to sequential security transducers. The difference is semantic organization: edit automata are monolithic enforcers, while layer order semantics factors the edit behavior into named ordered layers and attaches a layer-order automaton and decision congruence.

\begin{theorem}[Mutual representation with edit automata]\label{thm:edit_equiv}
Every finite-output deterministic edit automaton is represented by a one-layer layer-order system with equivalent observable behavior. Conversely, every layer-order transducer chain compiles to a finite-output deterministic edit automaton with equivalent observable behavior.
\end{theorem}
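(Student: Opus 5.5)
The plan is to prove the two directions separately. Both rest on the fact that a finite-output deterministic edit automaton and a sequential security transducer have the same underlying tuple shape $(Q,\Gam,q_0,\delta,\eta)$. First I will fix the meaning of ``equivalent observable behavior'' as equality of the extended output functions on every finite input word. Concretely, for each $w\in\Gam^{*}$ the edit automaton and the layered system must emit the same word, and hence also the same image under $\obs$ and under $\markproj$. For an edit automaton $\E$, the extension $\E(w)$ is defined by the same recursion as for $\T$: $\E(\eps)=\eps$ and $\E(xa)=\E(x)\eta(\delta^{*}(q_0,x),a)$. Working at the level of the exact output word is the strongest notion, so equivalence under any erasing projection follows at once.

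For the forward direction, given $\E=(Q,\Gam,q_0,\delta,\eta)$, I take $\T_1:=\E$ viewed as a sequential security transducer. The order automaton is the universal one-state DFA accepting $\Layers^{*}$, and $\Adec$ is the one-state automaton accepting all of $\Gam^{*}$. The only arrangement is the identity, so $E_{\mathrm{id}}=\T_1$. Then $E_{\mathrm{id}}(w)=\E(w)$ holds by a trivial induction on $|w|$, because the two recursions coincide symbol by symbol. Causal visibility holds because the output on $a$ depends only on the current state and $a$. This mirrors the converse step of Proposition~\ref{prop:regular}.

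For the converse, given a chain $\T_{\pi(1)},\ldots,\T_{\pi(k)}$, I iterate Lemma~\ref{lem:compose} $k-1$ times, by induction on $k$, to obtain a single transducer $\T_\pi$ on state space $Q_{\pi(1)}\times\cdots\times Q_{\pi(k)}$ with $\T_\pi(w)=E_\pi(w)$ for all $w$. The step that needs care is checking that the composed output map lands in $\Gam^{*}$ and is finite. For each $(p,r)$ and $a$, the block $v=\T_{2,r}(\eta_1(p,a))$ is obtained by running $\T_2$ over a finite word, and each step of that run emits a finite word, so $v$ is finite. Induction carries finiteness through all $k$ layers. Reading this $\T_\pi$ as an edit automaton $\E_\pi$ then gives identical output functions.

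If the claim is meant to cover the full recognizer rather than only the edit function, I add the order and decision components as in Theorem~\ref{thm:compile}, keeping them as passive observers. The main obstacle is therefore not technical but definitional: I must make sure that observable behavior refers to emitted output and not to layer-local invariants. I will add a remark that the translation preserves $E_\pi$ but forgets marker-birth positions, which is consistent with the definition of a layer-local invariant.
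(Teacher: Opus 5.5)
Your proposal is correct and follows the paper's proof essentially step for step. The forward direction installs the edit automaton as the single transducer under a universal order automaton and a permissive decision automaton. The converse compiles the chain via Lemma~\ref{lem:compose}, optionally taking the product with $\Aord$ and $\Adec$ as in Theorem~\ref{thm:compile}. Your explicit definition of observable behavior as exact output equality, and your induction on $k$ with the finiteness check, are more careful versions of what the paper leaves implicit.
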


\begin{proof}
For the first direction, use a universal order automaton, place the edit automaton as the single transducer, and choose a decision automaton that accepts all outputs or the desired output language. The produced output on every input word is identical.

For the second direction, compose the transducer chain using Lemma~\ref{lem:compose}. The resulting sequential transducer is a finite-output deterministic edit automaton. If order checking and decision state must be part of the edit state, take the product with $\Aord$ and $\Adec$ as in Theorem~\ref{thm:compile}. The observable output remains exactly the output of the ordered chain.
\end{proof}

\begin{corollary}[Preservation of enforcement notions]\label{cor:preserve_enf}
Transparency, soundness, and maximal permissiveness are invariant under the equivalence of Theorem~\ref{thm:edit_equiv} whenever they are defined over the same observable output projection.
\end{corollary}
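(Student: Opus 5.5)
The approach is to show that each of the three enforcement notions is a predicate on the function $w\mapsto \obs(M(w))$ (together with the prefix-wise emitted output), and that the equivalence of Theorem~\ref{thm:edit_equiv} preserves this function exactly. First I will fix the common observable projection $\obs$, which erases control symbols and is applied to the output emitted on each finite prefix. Next I will extract from the proof of Theorem~\ref{thm:edit_equiv} the precise invariant it supplies. In both directions the emitted output word on every input prefix $x$ is identical: $E_\pi(x)=\E(x)$. The first direction holds because the edit automaton is literally the single layer. The second holds by Lemma~\ref{lem:compose}, since the composed transducer's extension agrees with the chain by induction on $|x|$. Taking the product with $\Aord$ and $\Adec$ only enlarges the state, not the output map.

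Then I verify the three notions one at a time. Transparency, $M(w)=w$ for $w\in P$, refers only to emitted outputs on complete inputs, so it transfers immediately. Soundness, $\obs(M(w))\in P$ for every $w$, depends only on $\obs\circ M$, so it also transfers. Maximal permissiveness is stated prefix-wise: at each $xa$ it copies $a$ iff the current observable output stays in $\Pref(P)$, and otherwise emits $\Deny$ and latches. Since the emitted block on $a$ at prefix $x$ is the difference between the outputs on $xa$ and on $x$, and both are equal across the equivalence, the copy/deny choice at each prefix coincides. Latching, meaning that no unlatched original event follows the first $\Deny$, is likewise a property of the prefix output sequence. The causality and totality required by faithful enforceability hold automatically, because both sides are deterministic sequential transducers with $\eta$ total.

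The main obstacle is making sure that ``equivalent observable behavior'' in Theorem~\ref{thm:edit_equiv} means prefix-wise equality of emitted words, and not merely equality of $\obs$ on completed words. Maximal permissiveness and transparency are sensitive to exactly when and what is emitted, including $\Deny$ and any decoration. I would resolve this by restating the equivalence as equality of the raw output functions on all prefixes, which the constructions actually give. The hypothesis ``defined over the same observable output projection'' then only requires that both sides use the same $\obs$ and the same $\Pref(P)$. If only the weaker $\obs$-equivalence were available, soundness would still transfer, but transparency would need raw equality, and I would note this in the argument.
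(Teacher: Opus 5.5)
Your proposal is correct and follows the paper's argument, spelled out in more detail. All three notions are predicates on the enforcer's input--output function, and Theorem~\ref{thm:edit_equiv} gives identical emitted output on every input word (hence on every prefix, since prefixes are themselves inputs), so their truth values coincide. Your remark that transparency and the $\Deny$-emission clause of maximal permissiveness need equality of raw outputs, not merely of $\obs$-projections, sharpens a point the paper's one-line proof leaves implicit in its hypothesis about a shared observable projection.
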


\begin{proof}
Each property quantifies only over the input word and the observable output word. The two representations produce the same observable output for every input. Therefore the truth of each property is identical in the two representations.
\end{proof}

\begin{remark}
The equivalence sharpens the purpose of layer order semantics. Edit automata supply the monolithic enforcement mechanism. Layer order semantics supplies the named decomposition, reorder congruence, evidence-marker discipline, and audit vocabulary for cybersecurity pipelines assembled from multiple components.
\end{remark}

\begin{theorem}[Layer-local invariants are not monolithic invariants]\label{thm:local_not_mono}
There exist two layer-order chains with the same compiled edit function on every input word and the same decision language, but with different truth values for a layer-local marker-birth invariant. Hence the layered representation carries compositional security information not determined by the monolithic edit behavior.
\end{theorem}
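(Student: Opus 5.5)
The proof is an explicit construction. We take two chains with the same components in different orders (or a two-layer chain against a one-layer one) whose composites coincide but in which the marker \(\mP\) is born at different component positions.

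Fix a single-event alphabet \(\Sig=\{e\}\) and define two one-state transducers:
\begin{itemize}[leftmargin=*]
\item an \emph{inserter} \(\T_{\mathrm{ins}}\) with \(\eta(q,e)=e\,\mP\) and identity on all other symbols of \(\Gam\);
\item the identity transducer \(\T_{\mathrm{id}}\).
\end{itemize}
Chain \(A\) runs \(\T_{\mathrm{ins}}\) then \(\T_{\mathrm{id}}\). Chain \(B\) runs \(\T_{\mathrm{id}}\) then \(\T_{\mathrm{ins}}\). Both use the universal order automaton and the same decision automaton \(\Adec\), say one accepting every word of \(\Gam^{*}\), or any fixed regular marker policy.

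By Lemma~\ref{lem:compose}, both chains compile to the same one-state sequential transducer. The check is a direct induction on \(|w|\): the identity commutes with anything, so \(\T_{\mathrm{id}}\circ\T_{\mathrm{ins}}=\T_{\mathrm{ins}}\circ\T_{\mathrm{id}}\), and both map \(e^{n}\) to \((e\,\mP)^{n}\). Since the edit functions agree on every input and \(\Adec\) and \(\Aord\) are shared, the construction of Theorem~\ref{thm:compile} yields the same recognizer. Hence the two decision languages coincide.

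Next we fix the layer-local invariant: ``every occurrence of \(\mP\) is born at component position \(1\)''. Marker birth is recorded per layer as the index of the first transducer whose output block contains a marker occurrence absent from its input block. In chain \(A\) the invariant holds, since \(\mP\) is born at position \(1\). In chain \(B\) it fails on any nonempty input, since \(\mP\) is born at position \(2\). So the two chains disagree on a layer-local invariant while agreeing on the monolithic edit function and the decision language. This is exactly the content of the theorem's final sentence, because the invariant cannot be a function of the compiled edit behavior.

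The main obstacle is definitional rather than computational. The paper's Layer-local invariant definition is informal, so I must fix precisely what marker birth position means before the separation is rigorous. The subtle point is identifying ``the same occurrence'' across layers when a transducer copies markers. With the identity/inserter pair this is trivial, because only one layer ever produces \(\mP\) and the other copies symbol by symbol. This is why I choose that pair rather than a more realistic framing and parser example.
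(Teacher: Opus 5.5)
Your proposal is correct and is essentially the paper's own proof: the paper likewise pairs an inserter ($a\mapsto am$) with a copying layer in the two orders, notes that the compiled outputs coincide, and separates the chains by the invariant that $m$ is born before layer~2 and survives every suffix layer. Your explicit notion of birth position tightens the paper's informal definition, but you should state that the invariant quantifies over inputs in $\Sig^{*}$ (as your choice $\Sig=\{e\}$ implicitly does), since on words of $\Gam^{*}$ that already contain $\mP$ the clause ``every occurrence of $\mP$ is born at position~1'' fails in both chains.
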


\begin{proof}
Let the input alphabet contain a trigger \(a\) and let \(m\in\Marks\). Chain \(C_1\) has two layers: the first maps \(a\) to \(am\), and the second copies all symbols. Chain \(C_2\) has two layers: the first copies all symbols, and the second maps \(a\) to \(am\). On every input word, the compiled output of \(C_1\) and \(C_2\) is identical: every occurrence of \(a\) is followed by \(m\), and all other symbols are copied. Therefore any monolithic edit automaton and any final decision automaton observe the same input-output function and the same recognized language.

Now consider the layer-local invariant \(I\): marker \(m\) is born before layer 2 and is preserved by every suffix layer. Chain \(C_1\) satisfies \(I\); chain \(C_2\) does not, because \(m\) is born at layer 2. The invariant is meaningful for audit and causal visibility: a later layer in \(C_1\) can react to \(m\), while no layer before the second component in \(C_2\) can. The compiled edit behavior is identical while the invariant differs. The invariant is therefore a factorization property: it belongs to the named layer decomposition and disappears when the chain is flattened into a single editor.
\end{proof}

\section{Decision synthesis from marker policy}

\begin{definition}[Forbidden-marker policy]
Let $B\subseteq\Marks^{*}$ be a regular set of forbidden marker strings. Its allowed language is
\[
K_B=\Marks^{*}\setminus B.
\]
A decision automaton is marker-maximal for $B$ when it accepts exactly those edited words whose marker projection belongs to $K_B$.
\end{definition}

\begin{theorem}[Marker-policy synthesis]\label{thm:synth}
For every regular forbidden-marker policy $B$, one can synthesize a deterministic decision automaton $\mathcal{A}^{B}_{\mathsf{dec}}$ that is marker-maximal for $B$. If $B$ is upward closed under marker extension, then the resulting policy is prefix-closed and faithfully enforceable.
\end{theorem}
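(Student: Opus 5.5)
The construction has two parts: synthesize the automaton, then check prefix-closedness and invoke the earlier enforceability results. For synthesis, I start from a DFA \(A_B=(Q_B,\Marks,s_0,\delta_B,F_B)\) for the regular set \(B\), made complete, and complement it. This gives \(A_{K_B}\) with accepting set \(Q_B\setminus F_B\), which recognizes \(K_B=\Marks^{*}\setminus B\).

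I then lift \(A_{K_B}\) from \(\Marks\) to the full output alphabet \(\Gam\). The lifted automaton \(\mathcal{A}^{B}_{\mathsf{dec}}=(Q_B,\Gam,s_0,\delta_d,Q_B\setminus F_B)\) uses the transitions
\[
\delta_d(s,g)=\delta_B(s,g)\ \text{for } g\in\Marks,\qquad \delta_d(s,g)=s\ \text{for } g\in\Gam\setminus\Marks .
\]
By induction on \(|u|\) one gets \(\delta_d^{*}(s_0,u)=\delta_B^{*}(s_0,\markproj(u))\). Hence \(\mathcal{A}^{B}_{\mathsf{dec}}\) accepts \(u\) iff \(\markproj(u)\in K_B\), which is marker-maximality. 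Everything here is effective: determinization if \(B\) is given nondeterministically, completion, complement, and the self-loop lift.

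For the second claim, suppose \(B\) is upward closed under marker extension, meaning \(b\in B\) implies \(bv\in B\) for all \(v\in\Marks^{*}\). Then \(K_B\) is prefix-closed: if \(uv\in K_B\) but \(u\notin K_B\), then \(u\in B\), so \(uv\in B\), a contradiction. Next I transfer prefix-closedness to the input policy \(P_{K_B}=\{w:\markproj(E_{\pi}(w))\in K_B\}\). The map \(E_{\pi}\) is a sequential function, so \(x\preceq w\) gives \(E_{\pi}(x)\preceq E_{\pi}(w)\). Projection is monotone in the same sense, so \(\markproj(E_{\pi}(x))\preceq\markproj(E_{\pi}(w))\). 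Therefore \(w\in P_{K_B}\) implies \(x\in P_{K_B}\), and \(P_{K_B}\) is prefix-closed. It is regular by Theorem~\ref{thm:compile}, using the recognizer built with \(\mathcal{A}^{B}_{\mathsf{dec}}\). Corollary~\ref{cor:markersafe} then applies, or directly Theorem~\ref{thm:exactenf}, and yields faithful enforceability.

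The main obstacle is settling what ``the resulting policy'' means. It could be read at the marker level (\(K_B\)) or at the abstract-trace level (\(P_{K_B}\)). Only the latter is an object to which Theorem~\ref{thm:exactenf} literally applies, since that theorem concerns policies over \(\Sig^{*}\). So I will state explicitly that prefix-closedness passes from \(K_B\) to \(P_{K_B}\) through the prefix-monotonicity of \(E_{\pi}\) and of \(\markproj\). I will also note that this transfer needs marker generation to be causal, which Corollary~\ref{cor:markersafe} already assumes. A second point to flag is the direction of upward closure. It must be extension to the right (suffixes appended), since that is exactly what makes the complement prefix-closed. If ``extension'' were read as arbitrary superword insertion, the argument still goes through, because that condition is stronger.
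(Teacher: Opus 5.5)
Your proposal is correct and is essentially the paper's proof: complement a DFA for \(B\), lift it to \(\Gam\) with self-loops on non-marker symbols and \(A_B\)-transitions on markers, observe that upward closure makes \(K_B\) prefix-closed, and invoke Theorem~\ref{thm:exactenf}. Your extra step, transferring prefix-closedness from \(K_B\subseteq\Marks^{*}\) to \(P_{K_B}\subseteq\Sig^{*}\) via the prefix-monotonicity of \(E_{\pi}\) and \(\markproj\), makes explicit what the paper leaves implicit when it applies Theorem~\ref{thm:exactenf} directly to the marker language; note also that both arguments tacitly need \(\eps\notin B\), i.e.\ \(B\neq\Marks^{*}\), because when \(B=\Marks^{*}\) the policy is empty and no enforcer can be sound for the empty policy.
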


\begin{proof}
Let $A_B$ be a DFA for $B$. Complement it to obtain a DFA for $K_B$. Compose $\mathcal{A}^{B}_{\mathsf{dec}}$ with the projection homomorphism $\markproj$ by adding self-loops on non-marker symbols and using the $A_B$ transition on marker symbols. This automaton accepts exactly the edited words whose marker projection lies in $K_B$. If $B$ is upward closed under marker extension, then once a prefix enters $B$, every extension stays in $B$. Hence $K_B$ is prefix-closed. Faithful enforceability follows from Theorem~\ref{thm:exactenf}.
\end{proof}

This theorem is marker-policy maximality and, under prefix-closedness, full enforcement maximality. The distinction is structural: maximal marker acceptance is a language-synthesis claim, while maximal enforcement additionally requires causal safety.

\section{Threat model and abstraction soundness}

\begin{definition}[Concrete trace generator]
A concrete trace generator is an NFA
\[
G=(S,\Conc,s_0,\Delta,S_f)
\]
over a finite concrete alphabet $\Conc$. Its language $\Lang(G)$ is the set of concrete traces admitted by the threat model.
\end{definition}

\begin{definition}[Adversarial capability class]
An adversarial capability class is a regular subset $C\subseteq\Conc^{*}$. The threat model is a finite family
\[
\Adv=\{C_1,\ldots,C_m\}
\]
representing classes such as boundary ambiguity, canonicalization drift, parser differential, stale-session reuse, and low-rate repetition at the abstract level.
\end{definition}

\begin{definition}[Concrete-to-abstract map]
A concrete-to-abstract map is a sequential transducer
\[
\abstr:\Conc^{*}\to\Sig^{*}.
\]
For $X\subseteq\Conc^{*}$ write $\abstr(X)=\{\abstr(x):x\in X\}$.
\end{definition}

\begin{definition}[Concrete violation set]
A concrete violation set is a regular language $\Viol\subseteq\Conc^{*}$. It specifies concrete traces that the modeled security policy rejects.
\end{definition}

\begin{definition}[Faithful abstraction]
Let $P\subseteq\Sig^{*}$ be the accepted abstract policy. The abstraction $\abstr$ is faithful for $(\Viol,P)$ over generator $G$ when
\[
\abstr(\Viol\cap\Lang(G))\cap P=\varnothing.
\]
\end{definition}

\begin{theorem}[Automata-theoretic collision test]\label{thm:collisiontest}
If $G$, $\Viol$, $\abstr$, and $P$ are regular / sequential as above, faithfulness is decidable. If faithfulness fails, a shortest concrete witness is computable.
\end{theorem}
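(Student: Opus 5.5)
The plan is to reduce faithfulness to an emptiness test on one finite automaton. Faithfulness fails exactly when some concrete trace $x$ lies in $\Viol\cap\Lang(G)$ and has $\abstr(x)\in P$. So we build an automaton over $\Conc$ for the set
\[
W=\{x\in\Viol\cap\Lang(G):\abstr(x)\in P\}.
\]
Then faithfulness holds if and only if $W=\varnothing$.

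The steps are as follows.

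\textbf{Step 1 (the intersection).} Take a DFA $A_\Viol$ for $\Viol$. Form the standard product NFA of $G$ and $A_\Viol$. It recognizes $\Viol\cap\Lang(G)$.

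\textbf{Step 2 (the preimage).} Take a complete DFA $A_P=(Q_P,\Sig,p_0,\delta_P,F_P)$ for $P$. Let $\abstr=(Q_\abstr,\Conc,q_0,\delta_\abstr,\eta_\abstr)$ be the sequential transducer. The product has states $S\times Q_{\Viol}\times Q_\abstr\times Q_P$. On a concrete letter $c$ it moves $G$ nondeterministically along $\Delta$, moves $A_\Viol$ and $\abstr$ deterministically, and moves the policy component by
\[
p\mapsto \delta_P^{*}\bigl(p,\eta_\abstr(q,c)\bigr).
\]
This is the same block-simulation used in the proof of Theorem~\ref{thm:compile}. Each transition reads exactly one concrete letter, however long the output block is. A run ends in an accepting state when $G$ is in $S_f$, $A_\Viol$ accepts, and the policy component is in $F_P$.

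\textbf{Step 3 (correctness).} We prove by induction on $|x|$ that after reading $x$ the policy component equals $\delta_P^{*}(p_0,\abstr(x))$. The key fact is $\abstr(xc)=\abstr(x)\,\eta_\abstr(\delta_\abstr^{*}(q_0,x),c)$. Hence the product NFA recognizes exactly $W$, so $W$ is regular.

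\textbf{Step 4 (decision and witness).} Emptiness of an NFA is decidable by reachability from the initial state to an accepting state. For the witness, run breadth-first search on the finite product graph, where every edge carries one concrete letter. BFS finds a shortest path to an accepting state. The labels along that path give a shortest $x\in W$, and $x$ is a shortest concrete counterexample to faithfulness. The product is finite, so BFS terminates. Its cost is polynomial in $|S|\cdot|Q_\Viol|\cdot|Q_\abstr|\cdot|Q_P|\cdot|\Conc|$, times the cost of computing each output block.

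The main obstacle is the preimage step. The image $\abstr(\Viol\cap\Lang(G))$ is regular, but intersecting it with $P$ and then searching for a witness in the image would produce an abstract witness, not a shortest concrete one. Working with the preimage under $\abstr$ keeps the search on the concrete side and avoids this. The remaining point is that $\eps$ or multi-letter output blocks cause no trouble, because $\delta_P^{*}$ is applied to the whole block inside a single concrete transition.

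If "shortest" were instead meant with respect to the abstract trace length, we would weight each edge by $|\eta_\abstr(q,c)|$ and use Dijkstra's algorithm in place of BFS. This covers that reading as well.
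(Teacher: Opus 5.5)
Your proof is correct. It takes the dual of the paper's route.

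The paper builds an automaton over $\Sig$ for the forward image $\abstr(\Viol\cap\Lang(G))$, using the NFA--transducer product with edges labelled by output blocks. It intersects that automaton with a DFA for $P$, tests emptiness, and then runs breadth-first search over ``the product graph''. You instead stay on the concrete side and build an automaton for $\Viol\cap\Lang(G)\cap\abstr^{-1}(P)$, feeding each block $\eta_\abstr(q,c)$ to $A_P$ inside a single concrete transition.

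Both constructions live on essentially the same state space $S\times Q_\Viol\times Q_\abstr\times Q_P$, up to intermediate states for multi-letter blocks. They give the same emptiness answer, since $W\neq\varnothing$ iff $\abstr(\Viol\cap\Lang(G))\cap P\neq\varnothing$. The real difference is the edge labelling.

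Your labelling makes the second clause of the theorem literally justified: every edge reads exactly one concrete letter, so the BFS path spells a shortest concrete $x$. An image automaton is an automaton over $\Sig$, so a path in it spells an abstract word. To return a concrete witness, let alone one that is shortest in concrete length, the search must retain the concrete letters and the one-edge-per-concrete-letter structure that the image construction forgets. The paper leaves this point implicit.

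In exchange, the paper's route materializes the regular abstract violation image as a reusable object, which can be tested against several candidate policies $P$, and it exhibits the abstract collision trace directly. Your Dijkstra remark recovers that abstract-length reading within your own construction.
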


\begin{proof}
Construct an automaton for $\Viol\cap\Lang(G)$. Apply the sequential transducer $\abstr$ to obtain an automaton for its image; this is effective by the standard product construction between an NFA and a sequential transducer. Intersect the resulting automaton with a DFA for $P$. Emptiness decides faithfulness. If the intersection is nonempty, breadth-first search over the product graph returns a shortest witness.
\end{proof}

\begin{definition}[Atomic marker separation]
Let $A_1,\ldots,A_t\subseteq\Conc^{*}$ be regular concrete atomic risk classes and let $m_1,\ldots,m_t\in\Marks$. The abstraction is atom-separated when
\[
x\in A_i\cap\Lang(G)\implies m_i\text{ occurs in }\markproj(E_{\pi}(\abstr(x)))
\]
for every $i$.
\end{definition}

\begin{theorem}[Syntactic sufficient condition for no collision]\label{thm:syntactic_sound}
Assume the concrete violation set satisfies
\[
\Viol\cap\Lang(G)\subseteq \bigcup_{i=1}^{t} A_i.
\]
Assume the abstraction is atom-separated, all later transducers are marker-monotone, and the abstract policy $P$ rejects every word whose final marker projection contains any $m_i$. Then $\abstr(\Viol\cap\Lang(G))\cap P=\varnothing$.
\end{theorem}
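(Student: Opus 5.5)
The plan is to argue by contradiction on an arbitrary element. Suppose some word \(y\in\abstr(\Viol\cap\Lang(G))\cap P\) exists. Fix a concrete witness \(x\in\Viol\cap\Lang(G)\) with \(\abstr(x)=y\). By the covering hypothesis \(\Viol\cap\Lang(G)\subseteq\bigcup_{i}A_i\), there is an index \(i\) with \(x\in A_i\). Since \(x\in A_i\cap\Lang(G)\), atom separation gives that \(m_i\) occurs in \(\markproj(E_{\pi}(\abstr(x)))=\markproj(E_{\pi}(y))\).

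The second step pushes this occurrence forward to the final marker projection that \(P\) inspects. If \(E_{\pi}\) already denotes the full chain, nothing remains to do. If the policy is evaluated after further suffix layers \(\T_{j}\), I will use marker monotonicity. Exactly as in the proof of the share-alike proposition, each such layer preserves existing markers as a subsequence of its output. By induction on the number of later layers, \(m_i\) then still occurs in the final marker projection.

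The third step closes the argument. By hypothesis, \(P\) rejects every word whose final marker projection contains some \(m_i\), so \(y\notin P\). This contradicts the choice of \(y\), and hence \(\abstr(\Viol\cap\Lang(G))\cap P=\varnothing\), which is faithfulness in the sense of the faithful-abstraction definition.

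The only delicate point is the second step: pinning down what ``final marker projection'' refers to relative to \(E_{\pi}\), and making sure marker monotonicity is applied to exactly the layers after the point where atom separation certifies the birth of \(m_i\). I would state explicitly that marker monotonicity means \(\markproj(u)\) is a subsequence of \(\markproj(\T(u))\), and then run the one-line induction over suffix layers. With that convention fixed, the rest is a direct unwinding of the definitions.
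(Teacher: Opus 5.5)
Your proposal is correct and matches the paper's proof essentially step for step. You cover the violating trace by some $A_i$, obtain $m_i$ from atom separation, and carry that marker to the final word by marker monotonicity; Lemma~\ref{lem:markercomp} is the natural citation for your induction over suffix layers. You then conclude $\abstr(x)\notin P$; phrasing this as a contradiction on an element of the intersection, rather than the paper's direct pointwise argument, is only a cosmetic difference.
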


\begin{proof}
Let $x\in\Viol\cap\Lang(G)$. Then $x\in A_i$ for some $i$. Atom separation produces marker $m_i$ in the edited abstract word. Marker monotonicity preserves $m_i$ through the remaining transducers. The abstract policy rejects every word whose final marker projection contains $m_i$. Hence $\abstr(x)\notin P$. This holds for all violating concrete traces.
\end{proof}

\begin{corollary}[Failure classification]\label{cor:failureclass}
If concrete coverage fails, then at least one of the following holds: the abstract policy accepts a forbidden marker pattern, a required marker is not generated by the abstraction/layer chain, or a generated marker is removed before decision.
\end{corollary}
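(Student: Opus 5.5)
The plan is to prove the corollary as the contrapositive of Theorem~\ref{thm:syntactic_sound}. Take ``concrete coverage fails'' to mean that faithfulness fails, i.e.\ $\abstr(\Viol\cap\Lang(G))\cap P\neq\varnothing$. By Theorem~\ref{thm:collisiontest} there is then a concrete witness $x\in\Viol\cap\Lang(G)$ with $\abstr(x)\in P$. The task is to show that at least one hypothesis of Theorem~\ref{thm:syntactic_sound}, or its covering assumption, breaks at this particular witness. Each way it can break should then be matched to one of the three listed failure modes.

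I would proceed as follows. First, assume the covering condition $\Viol\cap\Lang(G)\subseteq\bigcup_i A_i$, since it is part of the modeling setup; this gives an index $i$ with $x\in A_i$. Second, trace the marker $m_i$ through the pipeline along $\abstr(x)$ and split into exhaustive cases.
\begin{enumerate}[label=(\roman*)]
\item $m_i$ never occurs in the output of the generating prefix of the chain, so atom separation fails at $x$. This is the case ``a required marker is not generated''.
\item $m_i$ is born at some layer but is absent from $\markproj(E_{\pi}(\abstr(x)))$. The case split is on the first layer after birth whose output lacks $m_i$, and that layer violates marker monotonicity. This is the case ``a generated marker is removed before decision''.
\item $m_i$ survives into the final marker projection, yet $\abstr(x)\in P$. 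Then $P$ does not reject every word containing $m_i$, so $P$ accepts a forbidden marker pattern.
\end{enumerate}
These three cases exhaust the possibilities for the single symbol $m_i$: either it never appears, or it appears and then disappears, or it reaches the decision layer.

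The main obstacle is the covering assumption. If $x$ lies in no atomic class $A_i$, no marker is attached to it and none of the three alternatives need literally hold. I would handle this by stating that the corollary is read relative to a fixed atomic decomposition that covers $\Viol\cap\Lang(G)$. Alternatively, uncovered violations can be absorbed into alternative (i), since for such an $x$ the chain generates no marker identifying the risk. I would note explicitly which reading is adopted.

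A secondary point needs care in case (ii). ``Removed before decision'' should be made precise as the survival of $m_i$ across the suffix layers between its birth position and $\Adec$. This links the corollary to the layer-local invariant discussed in Theorem~\ref{thm:local_not_mono}.
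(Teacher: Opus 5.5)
Your proposal is correct and follows the paper's route: the paper proves the corollary in one line as the contrapositive of Theorem~\ref{thm:syntactic_sound}, and your case split on the trajectory of $m_i$ is that same contrapositive applied to a single violating trace. Your remark about the covering hypothesis $\Viol\cap\Lang(G)\subseteq\bigcup_i A_i$ is apt, because the paper's one-line proof uses it without stating it. Also, you do not need Theorem~\ref{thm:collisiontest} to obtain the witness $x$: it exists directly because the intersection is nonempty.
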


\begin{proof}
Contrapositively, if the abstract policy rejects all forbidden marker patterns, the abstraction generates the required markers, and marker monotonicity preserves them, Theorem~\ref{thm:syntactic_sound} gives coverage.
\end{proof}

\section{Worked model: HTTP request desynchronization as abstract traces}

The worked model uses abstract event classes. Its purpose is to specify the automata path from concrete class to marker policy.

\begin{definition}[Concrete classes for the worked model]
Let $\Conc_H$ contain finite classes:
\[
\textsf{clean},\quad \textsf{ambFrame},\quad \textsf{canonShift},\quad \textsf{parserSplit},\quad \textsf{staleId},\quad \textsf{slowRepeat}.
\]
A generator $G_H$ accepts sequences in which edge observation may be followed by framing ambiguity, canonicalization shift, parser split, stale identity, and slow repetition.
\end{definition}

\begin{definition}[Abstraction map for the worked model]
The abstraction $\abstr_H:\Conc_H^{*}\to\Sig_H^{*}$ maps the concrete classes to layer-tagged events:
\[
\begin{aligned}
\textsf{ambFrame}&\mapsto (F,\textsf{http},\textsf{frame},\textsf{ambiguous}),\\
\textsf{canonShift}&\mapsto (C,\textsf{http},\textsf{canonical},\textsf{changed}),\\
\textsf{parserSplit}&\mapsto (P,\textsf{app},\textsf{parse},\textsf{differential}),\\
\textsf{staleId}&\mapsto (I,\textsf{id},\textsf{session},\textsf{stale}),\\
\textsf{slowRepeat}&\mapsto (B,\textsf{rate},\textsf{window},\textsf{low-slow}).
\end{aligned}
\]
The class $\textsf{clean}$ maps to an edge event with benign value.
\end{definition}

\begin{definition}[Marker-generation layers]
The framing layer $T_F$ emits $\mF$ on input value $\textsf{ambiguous}$. The canonicalization layer $T_C$ emits $\mC$ on value $\textsf{changed}$. The parser layer $T_P$ emits $\mP$ when it sees either a parser-differential event after $\mF$, or a parser-differential event after $\mC$. The identity layer emits $\mI$ on stale session evidence. The behavior layer emits $\mB$ on low-slow repetition.
\end{definition}

The parser rule is order-sensitive. It turns a parser differential into a stronger marker when prior framing or canonicalization evidence is visible.

\begin{lstlisting}[language=Python,caption={Abstract transducer interface for the parser marker.}]
def parser_layer(state, symbol):
    seen_frame_or_canon = state.seen_mF or state.seen_mC
    if symbol == mF: return state.update(seen_mF=True), [mF]
    if symbol == mC: return state.update(seen_mC=True), [mC]
    if symbol == event(P, "app", "parse", "differential") and seen_frame_or_canon:
        return state, [symbol, mP]
    return state, [symbol]
\end{lstlisting}

The snippet is a machine interface for abstract marker generation.

\begin{definition}[Forbidden-marker policy for the worked model]
The marker policy rejects a trace when the final marker word contains one of the factors
\[
\mF\Marks^{*}\mP,
\qquad
\mC\Marks^{*}\mP,
\qquad
\mI\Marks^{*}\mB.
\]
The first two represent boundary/canonicalization evidence reaching parser differential. The third represents identity risk combined with behavior risk.
\end{definition}

\begin{proposition}[Synthesized decision automaton for the worked model]\label{prop:worked_synth}
The worked forbidden-marker policy has a deterministic decision automaton with at most $2^{5}$ marker-memory states before minimization.
\end{proposition}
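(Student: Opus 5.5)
We construct the automaton explicitly, following the recipe of Theorem~\ref{thm:synth}. The difference is that we do not complement a generic DFA for $B$. Instead we exploit the fact that each forbidden factor has the shape $x\Marks^{*}y$ for single markers $x,y$. Membership of a marker word in
\[
B_H=\Marks^{*}(\mF\Marks^{*}\mP\cup\mC\Marks^{*}\mP\cup\mI\Marks^{*}\mB)\Marks^{*}
\]
reduces to a scattered-subsequence test: $u\in B_H$ iff $u$ contains $\mF$ before some $\mP$, or $\mC$ before some $\mP$, or $\mI$ before some $\mB$. A left-to-right scan therefore only needs to remember which ``opening'' markers have already occurred, together with a violation flag.

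The state set will be the Boolean vectors $(s_F,s_C,s_I,v)$ plus whatever extra bit is used to reach the stated bound. The cleanest choice is $\{0,1\}^5$, with bits $s_F,s_C,s_I$ (seen $\mF$, $\mC$, $\mI$), a violation bit $v$, and one spare bit. The spare bit could be, for example, a separately tracked ``$\mP$-armed'' bit $s_F\lor s_C$, or an explicit rejecting latch distinct from $v$. In any case the count is at most $2^5$, and the bound is an upper bound ``before minimization.''

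The transitions are the obvious ones:
\begin{itemize}
\item $\mF,\mC,\mI$ set their respective bits;
\item $\mP$ sets $v$ if $s_F\lor s_C$;
\item $\mB$ sets $v$ if $s_I$;
\item $\mD,\mQ,\mA$ are self-loops;
\item once $v=1$, every transition is absorbing.
\end{itemize}
The accepting states are those with $v=0$. As in the proof of Theorem~\ref{thm:synth}, we then lift the automaton to $\Gam^{*}$ by adding self-loops on all non-marker symbols. This composes it with $\markproj$, giving a deterministic automaton over edited words.

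Correctness is proved by induction on the length of the marker word. The invariant is that after reading $u$, $s_X=1$ iff $X$ occurs in $u$, and $v=1$ iff $u\in B_H$. The inductive step checks the last symbol: a new forbidden factor ending at the current position exists exactly when the current symbol is $\mP$ with an earlier $\mF$ or $\mC$, or $\mB$ with an earlier $\mI$. This yields marker-maximality for $B_H$ in the sense of the forbidden-marker definition.

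The main obstacle is not correctness. It is reconciling the construction with the specific bound $2^5$, since the natural automaton needs only $2^3+1$ reachable classes ($s_F,s_C,s_I$ plus the collapsed violation sink). We would state the five-bit encoding explicitly, so the bound holds by counting, and remark that minimization merges $s_F$ and $s_C$ and collapses all $v=1$ states into a single sink. It is also worth noting that $B_H$ is upward closed under extension, so Theorem~\ref{thm:synth} makes the induced policy prefix-closed and hence faithfully enforceable.
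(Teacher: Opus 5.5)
Your construction is correct and is essentially the paper's: the paper also uses a memory automaton recording which markers have occurred plus a violation flag, obtaining $2^4\cdot 2$ by tracking $\mF,\mC,\mI,\mB$, whereas you correctly note that $\mB$ only closes a factor and never needs to be remembered, so your four-bit automaton already meets the bound ``at most $2^5$'' and the spare bit (and the ``obstacle'' of reconciling with $2^5$) is unnecessary. One small repair: because you make every $v=1$ state absorbing, the invariant that $s_X=1$ iff $X$ occurs in $u$ holds only for $u\notin B_H$, so state it that way and justify the $v$ clause on longer words by the extension-closure of $B_H$ that you already observe at the end.
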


\begin{proof}
The decision automaton needs to remember whether $\mF$, $\mC$, $\mI$, and $\mB$ have occurred, and whether a rejecting factor has already occurred. This gives at most $2^4\cdot 2$ states. Minimization can only reduce the number.
\end{proof}

\begin{theorem}[Separating trace for two layer orders]\label{thm:http_sep}
There exists a trace in the worked model accepted under order $P\prec F\prec C\prec I\prec B\prec R$ and rejected under order $F\prec C\prec P\prec I\prec B\prec R$ by the same local transducers and the same decision automaton.
\end{theorem}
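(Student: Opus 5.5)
Exhibit an explicit witness and trace it through both compositions. Take the concrete trace $x=\textsf{ambFrame}\,\textsf{parserSplit}$, so $w=\abstr_H(x)=e_F\,e_P$, where $e_F=(F,\textsf{http},\textsf{frame},\textsf{ambiguous})$ and $e_P=(P,\textsf{app},\textsf{parse},\textsf{differential})$.

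\textbf{Step 1: order automaton and component behaviour.} Use a universal order automaton, or any regular $\Ord$ containing $FP$. Membership in $\Ord$ then does not separate the orders, and only the decision component matters. Fix the local transducers as in the marker-generation definition. $T_F$ maps $e_F\mapsto e_F\mF$ and copies everything else. $T_C$, $T_I$, $T_B$ and the response layer copy every symbol of $w$ and every marker. $T_P$ is the parser layer from the listing: it records $\mF$ or $\mC$ when it reads them and copies them through. It emits $e_P\mP$ only if such a marker was already seen, and otherwise copies $e_P$. Each of these transducers is marker-monotone on the symbols involved.

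\textbf{Step 2: evaluate both arrangements by induction along the chain.}
\begin{itemize}
\item \emph{Order $F\prec C\prec P\prec\cdots$.} $T_F$ outputs $e_F\mF e_P$, and $T_C$ leaves it unchanged. $T_P$ reads $\mF$ before $e_P$, so the seen-flag is set and it emits $e_F\mF e_P\mP$. The remaining layers copy. The final marker projection is $\mF\mP$, which contains the forbidden factor $\mF\Marks^{*}\mP$, so the synthesized automaton of Proposition~\ref{prop:worked_synth} (Theorem~\ref{thm:synth}) rejects.
\item \emph{Order $P\prec F\prec C\prec\cdots$.} $T_P$ runs first on $e_Fe_P$. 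It has seen no marker, because $\mF$ is born only later, so it copies $e_P$ without $\mP$. $T_F$ then inserts $\mF$, giving $e_F\mF e_P$. The marker projection is $\mF$, which contains none of the three forbidden factors, so the decision automaton accepts.
\end{itemize}
Hence $w\in\Lang_{\pi_1}\setminus\Lang_{\pi_2}$, as the statement requires.

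\textbf{Main obstacle.} The delicate point is definitional: the paper's semantics $\T_\pi$ over a permutation $\pi$ must be read so that the same transducer objects are reused and only their composition order changes. The argument relies on causal visibility, meaning $T_P$ in the first position cannot see a marker created by a later layer. I would make this explicit by writing the state of $T_P$ (its seen-flag) at each step of both runs.

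I would also check that $T_P$'s flag update is triggered only by marker symbols, not by $e_F$ itself. Otherwise $T_P$ could fire $\mP$ even when placed first, and the separation would fail. The listing confirms the flag is set only on $\mF$ or $\mC$, so the witness goes through.
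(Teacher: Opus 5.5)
Your proposal is correct and follows the paper's proof. It uses the same witness $\textsf{ambFrame}\,\textsf{parserSplit}$: the $F\prec C\prec P$ arrangement yields marker projection $\mF\mP$ (forbidden factor, rejection), and the $P$-first arrangement yields only $\mF$ (acceptance). Your extra checks, that $\layer(w)=FP\in\Ord$ and that the parser's seen-flag is set only by marker symbols, make explicit what the paper leaves implicit, but the argument is the same.
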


\begin{proof}
Let
\[
w=\textsf{ambFrame}\;\textsf{parserSplit}.
\]
After abstraction, the two relevant events are a framing-ambiguity event and a parser-differential event. Under order $F\prec C\prec P$, layer $F$ emits $\mF$ before $P$ runs. The parser layer sees the prior marker and emits $\mP$ on the parser-differential event. The marker projection contains $\mF\Marks^{*}\mP$, so the synthesized decision automaton rejects.

Under order $P\prec F\prec C$, the parser layer runs before the framing marker exists. It copies the parser-differential event but does not emit $\mP$. The framing layer later emits $\mF$, but the forbidden factor $\mF\Marks^{*}\mP$ is absent. The same decision automaton accepts. Thus the recognized language changes solely by reordering the layers.
\end{proof}

\section{Marker monotonicity and denial closure}

\begin{definition}[Marker-monotone transducer]
A transducer $\T$ is marker-monotone when
\[
\markproj(u)\preceq_{\mathrm{subseq}}\markproj(\T(u))
\]
for every $u\in\Gam^{*}$, where $\preceq_{\mathrm{subseq}}$ denotes subsequence inclusion preserving order.
\end{definition}

\begin{definition}[Denial-closed decision automaton]
A decision automaton is denial-closed when every word containing $\mD$ is rejected and every extension of a rejected state remains rejected.
\end{definition}

\begin{lemma}[Composition preserves marker monotonicity]\label{lem:markercomp}
The composition of marker-monotone transducers is marker-monotone.
\end{lemma}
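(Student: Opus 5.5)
The plan is to prove the claim directly from the definition of marker monotonicity, using transitivity of the subsequence order. Let \(\T_1\) and \(\T_2\) be marker-monotone transducers. By Lemma~\ref{lem:compose}, their composition \(\T_2\circ\T_1\) is again a deterministic sequential security transducer, and its extension to words agrees with \(u\mapsto\T_2(\T_1(u))\). So it suffices to compare \(\markproj(u)\) with \(\markproj(\T_2(\T_1(u)))\) for an arbitrary \(u\in\Gam^{*}\).

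Fix \(u\in\Gam^{*}\) and put \(u'=\T_1(u)\in\Gam^{*}\).
\begin{itemize}
  \item Marker monotonicity of \(\T_1\) gives \(\markproj(u)\preceq_{\mathrm{subseq}}\markproj(u')\).
  \item Since \(u'\) is itself a word over \(\Gam\), marker monotonicity of \(\T_2\) applies to it and gives \(\markproj(u')\preceq_{\mathrm{subseq}}\markproj(\T_2(u'))\).
\end{itemize}
Chaining the two relations yields \(\markproj(u)\preceq_{\mathrm{subseq}}\markproj(\T_2(\T_1(u)))\), provided \(\preceq_{\mathrm{subseq}}\) is transitive.

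Transitivity is the only substantive point, and it is straightforward. View a subsequence embedding of \(x\) into \(y\) as a strictly increasing index map \(\iota:\{1,\ldots,|x|\}\to\{1,\ldots,|y|\}\) with \(y_{\iota(j)}=x_j\) for every \(j\). The composite of two strictly increasing letter-preserving maps is again strictly increasing and letter-preserving, so \(x\preceq_{\mathrm{subseq}} y\) and \(y\preceq_{\mathrm{subseq}} z\) imply \(x\preceq_{\mathrm{subseq}} z\).

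For a chain \(\T_k\circ\cdots\circ\T_1\) of arbitrary length \(k\), induct on \(k\). The case \(k=1\) is the hypothesis. The inductive step applies the two-transducer argument to \((\T_{k-1}\circ\cdots\circ\T_1)\) and \(\T_k\). Lemma~\ref{lem:compose} guarantees at each stage that the partial composite is a transducer.

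I expect no real obstacle. The one point to state explicitly is that the hypothesis on \(\T_2\) quantifies over all of \(\Gam^{*}\), so it applies to the intermediate word \(\T_1(u)\), which may contain freshly inserted markers.
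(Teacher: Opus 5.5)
Your proof is correct and matches the paper's argument: apply marker monotonicity of \(\T_1\) to \(u\), apply that of \(\T_2\) to the intermediate word \(\T_1(u)\), and chain the two by transitivity of subsequence inclusion. Your check of transitivity via composed increasing index maps and your induction to chains of arbitrary length spell out steps the paper leaves implicit.
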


\begin{proof}
If $\markproj(u)$ is a subsequence of $\markproj(T_1(u))$, and $\markproj(T_1(u))$ is a subsequence of $\markproj(T_2(T_1(u)))$, then transitivity of subsequence inclusion gives the claim.
\end{proof}

\begin{theorem}[Evidence preservation]\label{thm:preservation}
If every layer after marker generation is marker-monotone, then every generated marker is present in the final word given to the decision automaton.
\end{theorem}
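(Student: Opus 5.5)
The plan is a direct induction on the suffix of the chain, reusing Lemma~\ref{lem:markercomp}. Fix an arrangement $\pi$ and an input word $w$. Let $j$ be a layer index such that a given marker occurrence $m$ appears in the output of layer $\pi(j)$, and write $u_j=\T_{\pi(j)}\circ\cdots\circ\T_{\pi(1)}(w)$ for the intermediate word. By hypothesis each later layer $\T_{\pi(j+1)},\ldots,\T_{\pi(k)}$ is marker-monotone. Lemma~\ref{lem:markercomp}, applied by induction on the number of remaining layers, shows that the suffix composition $S=\T_{\pi(k)}\circ\cdots\circ\T_{\pi(j+1)}$ is marker-monotone. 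Hence $\markproj(u_j)\preceq_{\mathrm{subseq}}\markproj(S(u_j))$. Since $S(u_j)=E_{\pi}(w)=\T_{\pi}(w)$ is exactly the word fed to $\Adec$, the claim follows.

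To make "every generated marker is present" precise, I will read it through the subsequence embedding. Subsequence inclusion gives an order-preserving injection of the marker positions of $u_j$ into those of the final word. So every marker occurrence born at or before layer $j$ has a distinct image of the same symbol in the final word. Moreover, the relative order of the markers is preserved, which is what the factor-style policies of the worked model need. If markers are born at several layers $j_1<\cdots<j_r$, I will apply the argument at each birth layer, always taking the suffix that starts after it. I will interpret "after marker generation" to mean that every layer following the respective birth layer is monotone, which is the hypothesis as stated.

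The main obstacle is a small definitional gap: marker monotonicity is stated for the whole-word function $\T(u)$, whereas the chain is evaluated online, prefix by prefix. I will close it by noting that Lemma~\ref{lem:compose} ensures the extension of a composed transducer to words agrees with the iterated whole-word application. The final word is therefore literally $S(u_j)$, and no separate prefix argument is needed. A secondary subtlety is that a monotone layer could emit extra markers, but that only enlarges the final projection and does not affect presence.
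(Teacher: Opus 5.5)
Your proof is correct and follows the paper's argument: the paper likewise applies Lemma~\ref{lem:markercomp} to the suffix chain after the marker's birth layer and concludes by subsequence inclusion. Your induction over the remaining layers and the explicit order-preserving embedding spell out what the paper leaves implicit, and the online-versus-whole-word concern in your last paragraph is not a real gap, since the word handed to $\Adec$ is by definition the whole-word composition $\T_{\pi}(w)$, so Lemma~\ref{lem:compose} is not needed.
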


\begin{proof}
Apply Lemma~\ref{lem:markercomp} to the suffix chain after the marker is generated. Subsequence inclusion preserves the generated marker through every remaining layer.
\end{proof}

\begin{corollary}[Latched denial enforcement]
If the decision policy is denial-closed and the denial marker is generated causally at the first bad prefix, then the stack faithfully enforces the induced prefix-closed marker policy.
\end{corollary}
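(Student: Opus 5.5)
The plan is to reduce the corollary to the exact enforceability result, Theorem~\ref{thm:exactenf}, together with the marker-policy synthesis of Theorem~\ref{thm:synth}. I first fix the induced policy. Let \(K\subseteq\Marks^{*}\) be the marker language accepted by the denial-closed decision automaton, and put \(P_K=\{w:\markproj(E_{\pi}(w))\in K\}\). This is the same form as in Corollary~\ref{cor:markersafe}.

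\textbf{Step 1: \(P_K\) is prefix-closed.} Denial-closure says that every extension of a rejected state remains rejected. So if a prefix \(x\) of \(w\) lies outside \(P_K\), then the decision automaton is in a rejecting state after reading \(E_{\pi}(x)\). Causality of \(E_{\pi}\) makes \(E_{\pi}(x)\) a prefix of \(E_{\pi}(w)\). Hence \(w\notin P_K\). Contrapositively, \(P_K\) is prefix-closed. Regularity of \(P_K\) follows as in Corollary~\ref{cor:markersafe}, using Lemma~\ref{lem:compose} and the fact that regular languages are closed under inverse images of sequential transducers.

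\textbf{Step 2: build the enforcer.} I realize the latched enforcer of Theorem~\ref{thm:exactenf} inside the stack itself. I append a final layer that runs the decision automaton causally on the emitted block of each input symbol, as in the product construction of Theorem~\ref{thm:compile}. It copies the original event while the decision state stays accepting. At the first bad prefix it emits \(\mD\) (equivalently \(\Deny\)) and moves to an absorbing \(\Rej\) state that outputs \(\eps\). The hypothesis that the denial marker is generated causally at the first bad prefix guarantees that this layer sees the violation at the same prefix where it occurs. Evidence preservation (Theorem~\ref{thm:preservation}) guarantees that no earlier-generated marker has been erased before this layer reads it.

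\textbf{Step 3: check the four conditions.} Causality and totality hold because every component is a finite-state sequential transducer. For transparency, on \(w\in P_K\) every prefix is in \(P_K\) by Step~1, so no denial fires and the event stream is copied unchanged. For soundness, the observable output is always a prefix lying in \(P_K\). Denial-closure ensures that a \(\mD\) occurrence makes the trace rejected, so no repaired unsafe trace can later be accepted. For maximal permissiveness, copying stops exactly at the first prefix leaving \(\Pref(P_K)=P_K\).

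\textbf{Main obstacle.} The delicate step is transparency. The marker-generating layers insert markers, so \(E_{\pi}(w)\neq w\) in general. I would address this by taking the enforcer's observable output to be the \(\Sig\)-projection \(\obs\) of the original events: the final layer outputs the original event \(a\) rather than the marker-decorated block. Transparency is then required only up to \(\obs\), which is consistent with Corollary~\ref{cor:preserve_enf}, where the enforcement notions are defined over the same observable projection.
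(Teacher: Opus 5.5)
Your proposal is correct and is essentially the paper's argument. The paper's proof likewise uses Theorem~\ref{thm:preservation} to keep the denial evidence alive, uses denial closure to make rejection latch on every extension (your Step~1), and uses causal generation at the first bad prefix to realize the maximally permissive latched construction of Theorem~\ref{thm:exactenf}. Your explicit handling of transparency, where the latching layer emits only original events so that inserted markers do not violate \(M(w)=w\), is a detail the paper's proof omits; it is sound provided the earlier layers pass \(\Sig\)-events through unchanged, or provided the latching layer instead reads the raw input and simulates the compiled chain internally, as in Theorem~\ref{thm:compile}.
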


\begin{proof}
The denial marker survives by Theorem~\ref{thm:preservation}. Denial closure rejects every extension containing it. Causal generation at the first bad prefix gives the maximally permissive construction from Theorem~\ref{thm:exactenf}.
\end{proof}

\section{Desynchronization families in the marker semantics}

The worked model uses a finite alphabet of marker classes whose firing rules are derived from parser behavior. The section identifies the documented request-smuggling families represented by those markers and connects each family to a formal object introduced above: the framing marker $\mF$, the parser-differential marker $\mP$, the layer order, the separating trace of Theorem~\ref{thm:http_sep}, the evidence-preservation theorem (Theorem~\ref{thm:preservation}), and the recognition/enforcement gap of Theorem~\ref{thm:recognize_not_enforce}. The invariant is the visibility of boundary evidence across the ordered chain. A deployment enters the disagreement region exactly when the relevant parser pair computes two request boundaries for the same connection state; the marker semantics records whether that region is detected, preserved, and judged before later layers erase or rewrite the evidence. The public exploitation methodology remains the empirical source for parser-specific probes \citep{kettle2019desync,kettle2021http2,kettle2022browser}; the finite-state model supplies the compositional certificate that says which ordered component libraries expose the boundary condition and which ones conceal it.

\subsection{Persistent-connection boundary computation}

A persistent HTTP connection carries requests back to back, and every processor on the path must independently compute where each request ends. Two header fields can answer that question: \texttt{Content-Length}, which states the body length directly, and \texttt{Transfer-Encoding: chunked}, which delimits the body with chunk sizes terminated by a zero-length chunk. RFC 9112 fixes a precedence---when both fields are present, \texttt{Transfer-Encoding} takes priority and the message is to be treated as suspect---precisely to remove the ambiguity \citep{rfc9112}. Request smuggling is what happens when two processors on the same path resolve that ambiguity differently, whether because one predates the precedence rule, tolerates an obfuscated \texttt{Transfer-Encoding} header that the other ignores, or accepts a duplicated or malformed field \citep{watchfire2005smuggling,cwe444}. The two processors then disagree about the boundary between the current request and the next, and bytes the attacker places after the apparent end of one request are reattributed by the other processor as the start of a new one.

The canonical CL.TE shape is shown below. The symbolic length and the symbolic trailing bytes are parameters of the parser-pair generator; the abstraction reads them through the boundary functions defined in Definition~\ref{def:parser}.

\begin{lstlisting}[caption={Canonical CL.TE desynchronization structure (schematic).}]
POST / HTTP/1.1
Host: target.example
Content-Length: <n>
Transfer-Encoding: chunked

0

<bytes attributed to the next request>
\end{lstlisting}

A front-end that honors \texttt{Content-Length} forwards $\langle n\rangle$ bytes---the entire body, including the trailing bytes---as a single request. A back-end that honors \texttt{Transfer-Encoding} reads the zero-length chunk, declares the body finished at the blank line after \texttt{0}, and treats the trailing bytes as the beginning of a second request on the same connection. The two layers now hold incompatible request boundaries. The TE.CL configuration is the mirror image, with the roles of the two headers exchanged; the TE.TE configuration arises when both layers nominally honor \texttt{Transfer-Encoding} but only one is fooled by an obfuscated form of the header, so that the other falls back to \texttt{Content-Length} \citep{kettle2019desync}.

\subsection{Deterministic parser-pair semantics}

The structural reading below rests on a deterministic semantics for boundary computation. The markers are \emph{derived} from parser behavior: each marker has a firing condition over a connection state and an ordered parser pair. This supplies an operational witness for the abstract transducers used earlier.

\begin{definition}[Boundary model]\label{def:boundary_model}
A \emph{connection state} is a pair $(h,w)$ where $h$ is a header context and $w\in\Conc^{*}$ is the trailing byte stream carried after the headers on a persistent connection. Two partial readings act on it: $\mathsf{cl}(h)\in\mathbb{N}\cup\{\bot\}$ returns the declared content length if a \texttt{Content-Length} field is present and $\bot$ otherwise; $\mathsf{ck}(w)\in\mathbb{N}\cup\{\bot\}$ returns the number of bytes of $w$ consumed up to and including a terminating zero-length chunk if $w$ is validly chunked, and $\bot$ otherwise. Both are deterministic functions of their arguments.
\end{definition}

\begin{definition}[Parser and boundary]\label{def:parser}
A \emph{parser} is a pair $\Pi=(\mathsf{te}_{\Pi},\beta_{\Pi})$ in which $\mathsf{te}_{\Pi}(h)\in\{0,1\}$ is a deterministic \emph{framing predicate}---it returns $1$ exactly when $\Pi$ resolves the framing of $h$ through chunked transfer coding, folding together which transfer-coding tokens $\Pi$ recognizes and whether transfer coding takes precedence over content length---and $\beta_{\Pi}$ is the induced \emph{boundary function}
\[
\beta_{\Pi}(h,w)\;\coloneqq\;
\begin{cases}
\mathsf{ck}(w) & \text{if } \mathsf{te}_{\Pi}(h)=1 \text{ and } \mathsf{ck}(w)\neq\bot,\\[2pt]
\mathsf{cl}(h) & \text{if } \mathsf{te}_{\Pi}(h)=0 \text{ and } \mathsf{cl}(h)\neq\bot,\\[2pt]
0 & \text{otherwise,}
\end{cases}
\]
giving the offset in $w$ at which $\Pi$ ends the current request. Write $\mathcal{Q}$ for a fixed finite class of admissible parsers, each recognizing every well-formed chunked directive. The front-end $\Pi_1$ takes content length to be authoritative ($\mathsf{te}_{\Pi_1}(h)=1$ only when no \texttt{Content-Length} is present); the back-end $\Pi_2$ takes transfer coding to be authoritative and tolerates an obfuscated chunked token that $\Pi_1$ does not ($\mathsf{te}_{\Pi_2}(h)=1$ whenever any chunked token, standard or obfuscated, is present). Both lie in $\mathcal{Q}$.
\end{definition}

The CL.TE state of Listing~1, with $\mathsf{cl}(h)=n$, a standard chunked field, and $\mathsf{ck}(w)=k<n$, yields $\beta_{\Pi_1}(h,w)=n$ and $\beta_{\Pi_2}(h,w)=k$; TE.CL and TE.TE are the analogous states with the roles or the obfuscation reassigned.

\begin{definition}[Ambiguity and realized disagreement]\label{def:ambdis}
A state $(h,w)$ is \emph{framing-ambiguous}, written $\mathsf{Amb}(h,w)$, when its boundary is policy-dependent: $\exists\,\Pi,\Pi'\in\mathcal{Q}$ with $\beta_{\Pi}(h,w)\neq\beta_{\Pi'}(h,w)$. The ordered pair $(\Pi_1,\Pi_2)$ exhibits \emph{realized disagreement}, written $\mathsf{Dis}(h,w)$, when $\beta_{\Pi_1}(h,w)\neq\beta_{\Pi_2}(h,w)$.
\end{definition}

\begin{definition}[Marker derivation]\label{def:lambda}
The labeling $\lambda$ maps a state, processed by the ordered pipeline, to a marker word in $\Marks^{*}$ by the rule: the front-end emits $\mF$ at its position exactly when $\mathsf{Amb}(h,w)$; a layer emits $\mP$ at its position exactly when it computes a boundary differing from the boundary it received; a canonicalization layer that rewrites $h$ to a framing-unambiguous form emits $\mC$ and forwards the rewritten state. This is the marker-generation rule of the worked model restricted to the boundary markers, now with concrete firing conditions.
\end{definition}

\begin{lemma}[Disagreement needs conflicting framing evidence]\label{lem:conflict}
If $(h,w)$ carries at most one framing mechanism---either $\mathsf{cl}(h)=\bot$ and no chunked token, or $\mathsf{cl}(h)\neq\bot$ and no chunked token, or a standard chunked token and $\mathsf{cl}(h)=\bot$---then every admissible parser computes the same boundary, so $\neg\mathsf{Amb}(h,w)$. Consequently $\mathsf{Dis}(h,w)\Rightarrow\mathsf{Amb}(h,w)$, and every disagreeing state carries both a content-length and a transfer-coding token, the latter possibly obfuscated.
\end{lemma}

\begin{proof}
In the first case every parser falls to the third branch and returns $0$. In the second, no parser recognizes a chunked directive, so each has $\mathsf{te}=0$ and returns $\mathsf{cl}(h)$. In the third, every admissible parser recognizes the standard token, so each has $\mathsf{te}=1$ and returns $\mathsf{ck}(w)$. In all three the boundary is constant over $\mathcal{Q}$, so $\neg\mathsf{Amb}$. Since $\mathsf{Dis}$ witnesses $\mathsf{Amb}$ with $\Pi=\Pi_1,\Pi'=\Pi_2$, the contrapositive gives $\mathsf{Dis}\Rightarrow\mathsf{Amb}$; and as the three single-mechanism cases are exactly the non-ambiguous ones, an ambiguous---hence any disagreeing---state has both mechanisms present.
\end{proof}

\begin{theorem}[Parser-pair adequacy]\label{thm:adequacy}
Let the layer order place the front-end $\Pi_1$ before the back-end $\Pi_2$ with no canonicalization layer rewriting $h$ between them. For every connection state $(h,w)$,
\[
\markproj\bigl(\lambda(h,w)\bigr)\ \text{contains the forbidden factor}\ \mF\Marks^{*}\mP
\quad\Longleftrightarrow\quad
\mathsf{Dis}(h,w).
\]
\end{theorem}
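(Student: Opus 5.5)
The proof unfolds the marker derivation $\lambda$ of Definition~\ref{def:lambda} for the specific two-layer pipeline in the hypothesis and proves the two implications separately. In this configuration only two layers can emit boundary markers. The front-end $\Pi_1$ sits at the first position. The back-end $\Pi_2$ sits after it, and no canonicalization layer lies in between. Hence no $\mC$ is emitted and $h$ reaches $\Pi_2$ unchanged.

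The first step is to describe the marker word completely. By Definition~\ref{def:lambda}, $\Pi_1$ emits $\mF$ exactly when $\mathsf{Amb}(h,w)$. The only other boundary marker is $\mP$, emitted at the position of $\Pi_2$. The front-end is the first boundary-computing layer and receives no prior boundary, so it cannot itself emit $\mP$. The back-end receives the boundary $\beta_{\Pi_1}(h,w)$ forwarded by the front-end, and it emits $\mP$ exactly when $\beta_{\Pi_2}(h,w)\neq\beta_{\Pi_1}(h,w)$, that is, exactly when $\mathsf{Dis}(h,w)$. Since $\Pi_1$ precedes $\Pi_2$, we get
\[
\markproj(\lambda(h,w)) = [\mathsf{Amb}(h,w)]\,\mF\cdot[\mathsf{Dis}(h,w)]\,\mP,
\]
where $[\phi]\,m$ denotes $m$ if $\phi$ holds and $\eps$ otherwise.

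With this description in hand, each implication takes a line.

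\begin{itemize}
\item ($\Rightarrow$) If the projection contains $\mF\Marks^{*}\mP$, then $\mP$ occurs. Since $\mP$ is emitted only by $\Pi_2$ under disagreement, $\mathsf{Dis}(h,w)$ holds.
\item ($\Leftarrow$) If $\mathsf{Dis}(h,w)$ holds, then $\Pi_2$ emits $\mP$. Lemma~\ref{lem:conflict} gives $\mathsf{Dis}\Rightarrow\mathsf{Amb}$, so $\Pi_1$ has already emitted $\mF$ at an earlier position. The word is therefore $\mF\mP$, which lies in the factor $\mF\Marks^{*}\mP$.
\end{itemize}

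The order hypothesis is essential in the ($\Leftarrow$) direction. It guarantees that $\mF$ precedes $\mP$, mirroring the separating trace of Theorem~\ref{thm:http_sep}.

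The main obstacle is pinning down the phrase ``the boundary it received'' in Definition~\ref{def:lambda}. I will fix a reading: the front-end forwards its computed boundary $\beta_{\Pi_1}$ together with the unmodified state. On this reading the front-end never fires $\mP$, since there is no upstream boundary to compare against, or equivalently it compares against its own value. This same point supplies the absence-of-canonicalization claim. Without a rewriting layer, $\Pi_2$ evaluates $\beta_{\Pi_2}$ on the same $(h,w)$ as $\Pi_1$, so the comparison it makes is literally $\mathsf{Dis}(h,w)$ and not disagreement on some rewritten state. I will state this reading explicitly before the case analysis.
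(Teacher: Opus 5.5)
Your proof is correct and follows the paper's argument. In both, the key step is reading ``the boundary it received'' in Definition~\ref{def:lambda} as $\beta_{\Pi_1}(h,w)$, which the no-rewrite hypothesis justifies, so that $\mP$ fires exactly at $\Pi_2$ under $\mathsf{Dis}(h,w)$, while Lemma~\ref{lem:conflict} supplies the earlier $\mF$. Your closed form $[\mathsf{Amb}(h,w)]\,\mF\cdot[\mathsf{Dis}(h,w)]\,\mP$ and your stipulation that the front-end never fires $\mP$ make explicit what the paper assumes when it says ``the emitting layer is $\Pi_2$.''
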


\begin{proof}
$(\Leftarrow)$ Suppose $\mathsf{Dis}(h,w)$. By Lemma~\ref{lem:conflict}, $\mathsf{Amb}(h,w)$ holds, so the front-end emits $\mF$ at its position. The back-end receives $(h,w)$ unaltered, since no intervening canonicalization rewrites $h$, and computes $\beta_{\Pi_2}(h,w)\neq\beta_{\Pi_1}(h,w)$, a boundary differing from the one the front-end fixed; by Definition~\ref{def:lambda} it emits $\mP$ at its later position. No layer between them emits a resolving rewrite, so the marker projection presents $\mF$ before $\mP$ with no erasure, i.e.\ it contains $\mF\Marks^{*}\mP$. $(\Rightarrow)$ Suppose the projection contains $\mF\Marks^{*}\mP$. The factor's $\mP$ is emitted only by a layer computing a boundary different from the one it received; with no intervening rewrite, that received boundary is $\beta_{\Pi_1}(h,w)$ and the emitting layer is $\Pi_2$, so $\beta_{\Pi_2}(h,w)\neq\beta_{\Pi_1}(h,w)$, which is $\mathsf{Dis}(h,w)$.
\end{proof}

\begin{remark}\label{rem:downgrade_reduction}
The no-rewrite hypothesis identifies the branch point for the HTTP/2-downgrade family. An intervening canonicalization layer rewrites $h$, so the back-end receives a boundary state indexed by the rewrite, and Theorem~\ref{thm:preservation} becomes the governing condition. Section~\ref{subsec:downgrade} analyzes this preservation-indexed form.
\end{remark}

\subsection{Boundary disagreement as a marker factor}

In the worked model, the framing layer $T_F$ is the abstract front-end $\Pi_1$ and the parser layer $T_P$ is the abstract back-end $\Pi_2$. The conflicting-header condition is the event that $T_F$ converts into the framing marker $\mF$, and the back-end's boundary computation is the event that $T_P$ converts into the parser-differential marker $\mP$ when the framing evidence is already visible. Under the parser-pair semantics of Definitions~\ref{def:boundary_model}--\ref{def:lambda}, Theorem~\ref{thm:adequacy} proves the marker factor $\mF\Marks^{*}\mP$ \emph{equivalent} to front-end/back-end boundary disagreement $\mathsf{Dis}$. The forbidden factor is the automata-level certificate that the two parsers compute different request boundaries.

\begin{proposition}[Desynchronization is order separation]\label{prop:desync_sep}
The CL.TE and TE.CL configurations realize the two layer orders of Theorem~\ref{thm:http_sep} over the same components. In the order in which the boundary-framing decision is visible before the differential layer commits, the trace produces the forbidden factor $\mF\Marks^{*}\mP$ and the synthesized decision automaton rejects; in the order in which the differential layer commits first, the same trace omits the factor and the same decision automaton accepts. The separating trace of Theorem~\ref{thm:http_sep} is therefore the request that one configuration frames consistently and the other desynchronizes.
\end{proposition}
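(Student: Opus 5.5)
The plan is to reduce the proposition to Theorem~\ref{thm:http_sep} and Theorem~\ref{thm:adequacy}. First I instantiate the abstract components with the parser-pair semantics. The framing layer $T_F$ is the content-length-authoritative front-end $\Pi_1$ in the CL.TE case, and the transfer-coding-authoritative one in the TE.CL case. The parser layer $T_P$ is the other member of the pair. I then take the CL.TE state of Listing~1, with $\mathsf{cl}(h)=n$, a standard chunked token, and $\mathsf{ck}(w)=k<n$; for TE.CL I take the mirror state. Under the abstraction $\abstr_H$, the conflicting-header condition maps to $\textsf{ambFrame}$ and the back-end boundary computation maps to $\textsf{parserSplit}$. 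This identifies the concrete request with the separating word $w=\textsf{ambFrame}\;\textsf{parserSplit}$ of Theorem~\ref{thm:http_sep}.

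Second, the rejecting order $F\prec C\prec P$ with $C$ acting as identity on $h$ is exactly the no-rewrite hypothesis of Theorem~\ref{thm:adequacy}. By Lemma~\ref{lem:conflict} and the computation $\beta_{\Pi_1}(h,w)=n\neq k=\beta_{\Pi_2}(h,w)$, the state satisfies $\mathsf{Dis}(h,w)$. Theorem~\ref{thm:adequacy} then gives the factor $\mF\Marks^{*}\mP$. Evidence preservation (Theorem~\ref{thm:preservation}) carries it through $I,B,R$, assuming these are marker-monotone, and the marker-maximal automaton of Proposition~\ref{prop:worked_synth} rejects. For the accepting order $P\prec F\prec C$, I invoke the second half of the proof of Theorem~\ref{thm:http_sep}. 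The differential layer commits with no visible $\mF$, so by the parser rule it emits no $\mP$. The later $\mF$ then has no $\mP$ after it, and the same automaton accepts. The TE.CL case is the same argument with the roles of $\mathsf{cl}$ and $\mathsf{ck}$ swapped. Disagreement is symmetric in the pair, so Theorem~\ref{thm:adequacy} applies verbatim.

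The main obstacle is reconciling the two semantics in the accepting order. Definition~\ref{def:lambda} makes $\mP$ fire whenever a layer's boundary differs from the one it \emph{received}. When the back-end runs first, it has received no front-end boundary, so I must argue that the received boundary is undefined and $\mP$ is not emitted. That matches the visibility-gated rule of the worked-model parser layer. I would first try to make this explicit by reading "received boundary" through causal visibility: a layer with no upstream boundary computation has nothing to differ from, so it copies. If that reading proves too strong, I would state the proposition's claim about the accepting order only at the abstract level of Theorem~\ref{thm:http_sep}, and use Theorem~\ref{thm:adequacy} solely for the rejecting order.
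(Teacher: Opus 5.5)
Your separation argument for the middle sentence is sound, and like the paper's it ultimately rests on Theorem~\ref{thm:http_sep}. The gap is in your identification step, which is exactly the part of the proposition that goes beyond that theorem. The statement says the two configurations \emph{are} the two orders over the same components. The paper fixes $T_F$ once as the framing-authoritative parser $\Pi_1$ and $T_P$ once as the differential parser $\Pi_2$. It then reads CL.TE and TE.CL as the two relative positions of these fixed components: $\Pi_1$ in front gives order $F\prec C\prec P$, and $\Pi_2$ in front gives order $P\prec F\prec C$. With that reading the first sentence is immediate and Theorem~\ref{thm:http_sep} gives the rest.

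You instead rebind $T_F$ to whichever parser is the front-end of each configuration. Then CL.TE and TE.CL both realize the rejecting order, each with different components, and no configuration realizes the accepting order. The accepting order becomes a pipeline in which the back-end runs before the front-end, and that is what produces your obstacle. Your TE.CL case duplicates the CL.TE case rather than supplying the other order. The closing sentence, which contrasts one configuration with the other, is left unproved.

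Your first remedy for the obstacle would also fail. Under Definition~\ref{def:lambda}, $\mF$ is tied to $\mathsf{Amb}(h,w)$, which does not depend on order. Since $\mathsf{Dis}$ is symmetric, as you note, whichever parser runs second receives a boundary different from its own and fires $\mP$. Declaring the received boundary undefined only silences the first-running layer. So the concrete labeling emits both $\mF$ and $\mP$ in either order and gives no way to exclude the factor.

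Your fallback is therefore not optional. The accepting half, and with it the separation, comes only from the visibility-gated rule of $T_P$: $\mP$ fires only after a visible $\mF$. That rule is the paper's entire proof. Theorem~\ref{thm:adequacy} can stay as a gloss on the rejecting half, certifying that the detected factor is genuine disagreement. But it also shows that $\mathsf{Dis}$ holds in both configurations. So ``frames consistently versus desynchronizes'' must be read as detected versus undetected, as in Proposition~\ref{prop:strata}.
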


\begin{proof}
Immediate from Theorem~\ref{thm:http_sep} under the identification of $T_F$ with the framing-authoritative layer and $T_P$ with the differential layer. The two configurations differ only in which layer's boundary decision is visible first, which is exactly the relative position of the layers in the order, and by the marker-generation rule $\mP$ is emitted if and only if $\mF$ precedes the differential event.
\end{proof}

The reading is a detection theorem. A pipeline whose order makes the framing evidence visible to the differential layer produces $\mP$, is rejected, and detects the disagreement; a pipeline whose order delays that evidence omits $\mP$, accepts, and leaves the same disagreement unobserved at the decision point. By Proposition~\ref{prop:desync_sep} the two orders share a separating trace and differ in marker visibility. Exploitability and impact refine that disagreement by adding reachability predicates over the gap region, as formalized next.

The recognition/enforcement gap has an equally concrete reading. Theorem~\ref{thm:recognize_not_enforce} exhibits a regular policy that the final decision automaton recognizes but that no causal enforcer can faithfully enforce. This is the back-end's predicament. A back-end can, after reassembling a request, recognize that it was malformed; but the front-end acts causally, committing to a framing and forwarding bytes before the back-end's interpretation exists. The desynchronization is created in the window between the front-end's irreversible forwarding decision and the back-end's later recognition---exactly the prefix on which a causal, transparent, and sound enforcer is proven impossible.

\subsection{Detection, exploitability, and impact}

Request smuggling factors into a chain of increasingly strong predicates. Boundary disagreement supplies the structural condition; attacker control of the differently attributed bytes supplies continuation; arrival at a security-relevant decision supplies impact. The semantics of Definitions~\ref{def:boundary_model}--\ref{def:ambdis} states these conditions as successive refinements over the same connection state.

\begin{definition}[Continuation and sink]\label{def:cont_sink}
Let $(h,w)$ satisfy $\mathsf{Dis}(h,w)$ and, without loss of generality, $\beta_{\Pi_1}(h,w)>\beta_{\Pi_2}(h,w)$, so the back-end ends the request early and reattributes the gap region $g=w[\,\beta_{\Pi_2}(h,w):\beta_{\Pi_1}(h,w)\,]$ as the start of a new request. The state satisfies the \emph{continuation} predicate $\mathsf{Cont}(h,w)$ when $g$ is attacker-controlled and $\Pi_2$ parses it as the prefix of a well-formed request. It satisfies the \emph{sink} predicate $\mathsf{Sink}(h,w)$ when the request begun by $g$ drives a decision the policy designates security-relevant.
\end{definition}

\begin{definition}[The four levels]\label{def:levels}
For the fixed parser pair, define
\emph{disagreement} as $\mathsf{Dis}(h,w)$;
\emph{detection} as $\mathsf{Dis}(h,w)$ together with delivery of $\mP$ to the decision automaton along a marker-monotone path;
\emph{exploitability} as $\mathsf{Dis}(h,w)\wedge\mathsf{Cont}(h,w)$;
and \emph{impact} as $\mathsf{Dis}(h,w)\wedge\mathsf{Cont}(h,w)\wedge\mathsf{Sink}(h,w)$.
\end{definition}

\begin{proposition}[Stratification]\label{prop:strata}
The three input-level conditions form a strict chain,
\[
\mathrm{impact}\ \Longrightarrow\ \mathrm{exploitability}\ \Longrightarrow\ \mathrm{disagreement}\ \Longleftrightarrow\ \mF\Marks^{*}\mP,
\]
the final equivalence holding under the hypotheses of Theorem~\ref{thm:adequacy}; no forward implication reverses, and detection is orthogonal to all three.
\end{proposition}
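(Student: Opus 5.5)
The plan has three parts: the forward implications, their strictness, and the orthogonality of detection.

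The forward implications are definitional unfolding of Definition~\ref{def:levels}. Impact is the conjunction $\mathsf{Dis}\wedge\mathsf{Cont}\wedge\mathsf{Sink}$, so dropping the conjunct $\mathsf{Sink}$ gives exploitability. Exploitability is $\mathsf{Dis}\wedge\mathsf{Cont}$, so dropping $\mathsf{Cont}$ gives disagreement. The final equivalence $\mathsf{Dis}(h,w)\Leftrightarrow$ ($\markproj(\lambda(h,w))$ contains $\mF\Marks^{*}\mP$) is exactly Theorem~\ref{thm:adequacy}, invoked under its no-rewrite hypothesis. This part is routine.

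For strictness, I will exhibit separating connection states for the fixed pair $(\Pi_1,\Pi_2)$ of Definition~\ref{def:parser}.

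\emph{Disagreement without exploitability.} Take a CL.TE state with $\mathsf{cl}(h)=n$, a standard chunked token, and $w$ validly chunked with $\mathsf{ck}(w)=k<n$. Then $\beta_{\Pi_1}=n\neq k=\beta_{\Pi_2}$, so $\mathsf{Dis}$ holds. Choose the gap $g=w[k:n]$ to be a byte string that $\Pi_2$ does not parse as the prefix of any well-formed request (for instance, a string that cannot begin a request line). Then $\mathsf{Cont}$ fails.

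\emph{Exploitability without impact.} Keep the same framing but let $g$ be a well-formed, attacker-controlled request prefix whose resulting request targets a resource the policy does not designate security-relevant. Then $\mathsf{Cont}$ holds and $\mathsf{Sink}$ fails.

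Since $\mathsf{Cont}$ and $\mathsf{Sink}$ are left as parameters in Definition~\ref{def:cont_sink}, I will state these witnesses under the mild nondegeneracy assumption that the designated sink set is a proper subset of the requests $\Pi_2$ can parse. The point is that the gap bytes are unconstrained by $\beta$, which depends on $w$ only through $\mathsf{ck}$ and $n$.

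For orthogonality of detection, the claim to establish is that detection neither implies nor is implied by exploitability or impact, beyond the shared conjunct $\mathsf{Dis}$. Detection adds a condition on the pipeline (delivery of $\mP$ along a marker-monotone path to $\Adec$), while $\mathsf{Cont}$ and $\mathsf{Sink}$ are conditions on the input state.

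I will vary the two coordinates independently. Fix the pipeline in two variants:
\begin{enumerate}[label=(\roman*)]
  \item every layer after $\Pi_2$ is marker-monotone, so $\mP$ reaches $\Adec$ by Theorem~\ref{thm:preservation};
  \item a later layer suppresses $\mP$, which is allowed since suppression is an edit primitive.
\end{enumerate}
Pair each variant with the input states already built: one with $\mathsf{Dis}\wedge\neg\mathsf{Cont}$, and one with $\mathsf{Dis}\wedge\mathsf{Cont}\wedge\mathsf{Sink}$. This gives all four combinations of detected/undetected against impactful/non-impactful. The same construction also realizes detected-but-not-exploitable and undetected-but-exploitable.

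The main obstacle is stating orthogonality precisely. Detection is not a predicate on $(h,w)$ alone, and it also contains $\mathsf{Dis}$, so literal independence from disagreement is false. I will therefore formalize orthogonality as independence relative to $\mathsf{Dis}$: for each of the three levels, detection and that level are logically independent on states satisfying $\mathsf{Dis}$, with the pipeline's suffix allowed to vary. I will note explicitly that the chain's input-level conditions are invariant under that variation, because the boundary functions $\beta_{\Pi_1}$ and $\beta_{\Pi_2}$ do not see suffix layers.
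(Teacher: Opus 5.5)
Your proposal is correct. The forward implications and the two strictness witnesses follow the paper's pattern of negating one conjunct of $\mathsf{Cont}$ or $\mathsf{Sink}$. The paper kills $\mathsf{Cont}$ through attacker control, with fixed non-attacker gap content. You kill it through parseability. Either works, and your observation that the bytes after the terminating chunk do not affect $\beta$ is the justification the paper leaves implicit.

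The genuine difference is orthogonality. The paper reuses Proposition~\ref{prop:desync_sep}: it fixes the component library and swaps the framing and differential layers. One arrangement then delivers $\mP$ and the other never generates it, while $\mathsf{Dis}$, $\mathsf{Cont}$, $\mathsf{Sink}$ stay constant because they depend only on $(h,w)$ and the parser pair. You instead keep the front-end-before-back-end, no-rewrite configuration and vary only the suffix, marker-monotone versus $\mP$-suppressing.

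The paper's route buys the thematic point that layer order alone, over the same components, decides detection. Yours changes the library, so it cannot show that. Yours buys a cleaner separation. Both of your variants satisfy the hypotheses of Theorem~\ref{thm:adequacy}, so the factor $\mF\Marks^{*}\mP$ is generated in both and only its delivery changes. Detection is thereby isolated as an evidence-preservation property in the sense of Theorem~\ref{thm:preservation}. In the paper's reversed order, by contrast, $\mP$ is never generated and the adequacy hypotheses no longer hold.

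You also state precisely that detection contains $\mathsf{Dis}$ as a conjunct, so ``orthogonal'' can only mean independence relative to $\mathsf{Dis}$. You then realize all four combinations, not merely that detection is not a function of the input-level predicates. The paper's proof glosses over this point.

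One small slip: your pairing uses a state satisfying $\mathsf{Dis}\wedge\mathsf{Cont}\wedge\mathsf{Sink}$, which you never construct. Your nondegeneracy assumption (sink set a proper subset of parseable requests) does not guarantee one exists. Also assume the sink set is nonempty.
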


\begin{proof}
The two implications are immediate from the conjunctive form of Definition~\ref{def:levels}, and the final equivalence is Theorem~\ref{thm:adequacy}. Strictness of the first nonreversal: a state with both framing mechanisms present and $\beta_{\Pi_1}\neq\beta_{\Pi_2}$ whose gap region $g$ is fixed, non-attacker content---for instance a benign duplicated field inserted by an upstream intermediary---satisfies $\mathsf{Dis}$ but not $\mathsf{Cont}$. Strictness of the second: a gap region $g$ that is attacker-chosen and back-end-parseable but encodes a request to a public, unauthenticated, side-effect-free endpoint satisfies $\mathsf{Cont}$ but not $\mathsf{Sink}$. Orthogonality of detection: Proposition~\ref{prop:desync_sep} fixes two orders over the same components and the same separating trace, one delivering $\mP$ and one not; the predicates $\mathsf{Dis}$, $\mathsf{Cont}$, $\mathsf{Sink}$ depend only on $(h,w)$ and the parser pair, not on the order, so they are constant across the two arrangements while detection differs.
\end{proof}

The order theory governs the equivalence to $\mF\Marks^{*}\mP$ and the delivery of that factor to the decision automaton. The predicates $\mathsf{Cont}$ and $\mathsf{Sink}$ extend the certificate with reachability information over the disagreement region. Thus the separating trace is the structural certificate for smuggling potential, and continuation plus sink reachability upgrade that certificate to exploitability and impact.

\subsection{Downgrade desync as failed evidence preservation}\label{subsec:downgrade}

HTTP/2 carries message length in its binary framing, so an HTTP/2 front-end discharges the CL/TE ambiguity before forwarding. Downgrade desync arises when that front-end re-serializes the request into HTTP/1.1 for a back-end and emits length headers that the back-end re-interprets \citep{kettle2021http2,lerner2021http2}. In the model, the HTTP/2$\rightarrow$HTTP/1.1 rewrite is the canonicalization layer $T_C$, and the formal signature is a marker-monotonicity violation: $T_C$ removes the authoritative HTTP/2 boundary evidence and creates a fresh HTTP/1.1 framing state whose interpretation is contested. Theorem~\ref{thm:preservation} then gives the audit condition for downgrade-safe composition: a boundary certificate generated before rewriting must survive as a marker through $T_C$, or a new framing marker must be generated and judged after the rewrite.

\subsection{Client-side desync and the identity/behavior pair}

Client-side desync moves the attacker-influenced boundary into a victim's own browser, so that an ordinary user issues the poisoning request; the technique was disclosed with live cases against shipped proxies, including CVE-2022-20713 and CVE-2022-23959 \citep{kettle2022browser}. Structurally this enlarges the concrete trace generator $G$ of the threat model while preserving the same $\mF$/$\mP$ dependency, now reachable through a victim channel. The identity and behavior markers $\mI$ and $\mB$ abstract a different but compositionally identical risk. Credential stuffing, catalogued as ATT\&CK sub-technique T1110.004, succeeds by distributing reused credentials slowly and across many sources so that each local counter remains below its threshold \citep{mitre_t1110_004,mitre_attack}; the abstract forbidden pair $\mI\Marks^{*}\mB$ states that identity risk followed by behavior risk must be judged by a layer that observes both. Low-and-slow evasion is an order-and-visibility failure: evidence exists across the trace but is not co-located at the deciding layer.

\subsection{Boundary certificate and reachability strata}

The preceding definitions yield a certificate stack for ordered boundary security. The first certificate is $\mathsf{Dis}(h,w)$, witnessed by unequal parser boundaries. The second is detection, witnessed by delivery of $\mF\Marks^{*}\mP$ to the decision automaton along a marker-monotone path. The third is exploitability, witnessed by $\mathsf{Cont}(h,w)$ over the gap region. The fourth is impact, witnessed by $\mathsf{Sink}(h,w)$. Layer order semantics computes and preserves the first two certificates and exposes the exact predicates needed to refine them into the latter two. The concrete parser-pair semantics supplies firing conditions for the markers, and Theorem~\ref{thm:adequacy} turns those firing conditions into an equivalence between the forbidden marker factor and parser boundary disagreement. The resulting boundary discipline is operational: a pipeline enforces it online when the disagreement marker is generated at the first bad prefix, preserved by every suffix layer, and rejected by a denial-closed decision automaton.

\section{Canonical reorder congruence}

\begin{definition}[Compiled recognizer of an arrangement]
For a permutation $\pi$ of the layer components, let $M_{\pi}$ be the compiled DFA from Theorem~\ref{thm:compile}. Let $\Min(M_{\pi})$ denote its minimal complete DFA after unreachable states are removed.
\end{definition}

\begin{definition}[Decision-Nerode reorder congruence]
Two arrangements $\pi$ and $\rho$ are decision-congruent, written $\pi\equiv_{\Adec}\rho$, when
\[
\Lang(M_{\pi})=\Lang(M_{\rho}).
\]
Equivalently, $\Min(M_{\pi})$ and $\Min(M_{\rho})$ are isomorphic as pointed accepting DFAs.
\end{definition}

\begin{theorem}[Canonical reorder classes]\label{thm:canonical}
For a finite component library, $\equiv_{\Adec}$ partitions the finite set of layer permutations into canonical minimal reorder classes. Each class has a minimal DFA representative unique up to isomorphism.
\end{theorem}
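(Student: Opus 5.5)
The plan is to show that $\equiv_{\Adec}$ is an equivalence relation on a finite set, and then to pick the minimal DFA as the canonical invariant of each class. Fix the component library $\T_1,\ldots,\T_k$, the order automaton $\Aord$ and the decision automaton $\Adec$. The permutations $\pi$ of the $k$ components form a finite set $S_k$ with $k!$ elements. For each $\pi$, Theorem~\ref{thm:compile} (via Lemma~\ref{lem:compose}) gives a compiled DFA $M_{\pi}$ with $\Lang(M_{\pi})=\Lang_{\pi}(\Ssys)$. The construction is effective: its state space is a product of $Q_o$, the product of the transducer state sets, and $Q_d$, all of which are finite.

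The first step is to check that $\equiv_{\Adec}$ is an equivalence relation. Reflexivity, symmetry and transitivity are inherited directly from equality of languages. Its equivalence classes therefore partition $S_k$ into finitely many classes, at most $k!$ of them.

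The second step, canonicity, is the substantive one. Here I invoke the Myhill--Nerode theorem. Every regular language $L$ has a minimal complete DFA, namely the Nerode quotient automaton, and any complete accessible DFA recognizing $L$ with the minimum number of states is isomorphic to it by a unique isomorphism preserving the initial state and the accepting states. Applied to $\Min(M_{\pi})$, obtained by removing unreachable states and merging Nerode-equivalent states (for instance by Hopcroft or Moore partition refinement), this yields both directions of the ``equivalently'' clause in the definition:
\begin{itemize}[leftmargin=*]
  \item if $\Lang(M_{\pi})=\Lang(M_{\rho})$, then both minimal DFAs are isomorphic to the Nerode automaton of that common language;
  \item conversely, isomorphic pointed accepting DFAs recognize the same language.
\end{itemize}
Consequently the map $[\pi]\mapsto \Min(M_{\pi})$ (up to isomorphism) is well defined on classes and injective. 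Each class then has a minimal DFA representative, unique up to isomorphism, which serves as its canonical label.

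The main point requiring care is that the alphabet and the completion convention must be fixed uniformly across arrangements. All $M_{\pi}$ must read the same $\Sig$ and be completed with a sink state in the same way, so that ``isomorphic as pointed accepting DFAs'' is the right notion. One subtlety deserves mention: if a language is empty, its minimal complete DFA is a single rejecting state. This is still unique, so no exception arises. Finally, I would note that the classes are computable, since language equality of DFAs is decidable, for example by minimization followed by an isomorphism test, or by emptiness of the symmetric difference. This decidability is not needed for the statement itself, but it supports the later algorithmic claims.
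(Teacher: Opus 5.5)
Your proposal is correct and takes the same approach as the paper's proof. $\equiv_{\Adec}$ inherits being an equivalence relation from language equality of the compiled recognizers $M_{\pi}$, and Myhill--Nerode uniqueness of the minimal complete DFA supplies $\Min(M_{\pi})$ as the canonical representative of each class, with your remarks on the uniform alphabet and completion convention, the empty language, injectivity of $[\pi]\mapsto\Min(M_{\pi})$, and decidability being sound elaborations rather than a different route.
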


\begin{proof}
Language equality of DFAs is an equivalence relation, hence so is $\equiv_{\Adec}$. Every regular language has a unique minimal complete DFA up to isomorphism. Assigning each permutation $\pi$ to $\Min(M_{\pi})$ therefore yields canonical classes indexed by minimal recognizers.
\end{proof}

\begin{theorem}[Equivalence with witness]\label{thm:eqwitness}
For any two arrangements $\pi,\rho$, equivalence is decidable. If they are inequivalent, a shortest distinguishing trace is computable.
\end{theorem}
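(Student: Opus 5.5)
The plan is to reduce the question to the classical equivalence problem for DFAs, using the compiled recognizers supplied by Theorem~\ref{thm:compile}. For the two arrangements \(\pi\) and \(\rho\), I first build the compiled DFAs \(M_{\pi}\) and \(M_{\rho}\). Lemma~\ref{lem:compose} collapses each ordered chain into a single sequential transducer, and the product with \(\Aord\) and \(\Adec\) then yields a DFA over \(\Sig\). This construction is effective. The state set is bounded by \(|Q_o|\cdot\prod_i|Q_i|\cdot|Q_d|\), and each transition is obtained by running \(\Adec\) over a finite output block \(\eta(q,a)\). That block can be computed by simulating the component transducers in order on the one-symbol input.

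Next I form the symmetric-difference product \(D=M_{\pi}\times M_{\rho}\). A product state \((s,t)\) is accepting exactly when exactly one of \(s\), \(t\) is accepting in its own DFA. By construction,
\[
\Lang(D)=\Lang(M_{\pi})\,\triangle\,\Lang(M_{\rho}),
\]
so \(\pi\equiv_{\Adec}\rho\) holds if and only if \(\Lang(D)=\varnothing\). Emptiness is decided by reachability from the initial pair \((s_0,t_0)\). This gives decidability.

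For the witness I run breadth-first search on the transition graph of \(D\) from \((s_0,t_0)\), recording for each discovered state a parent pointer and the input letter used to reach it. The first accepting state found yields, by backtracking, a word \(w\in\Sig^{*}\) of minimal length in the symmetric difference. This \(w\) is a shortest distinguishing trace: it is accepted under one arrangement and rejected under the other. Every shorter word reaches only non-accepting states of \(D\), so every shorter word is classified the same way by both arrangements.

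An alternative route goes through the minimal DFAs \(\Min(M_{\pi})\) and \(\Min(M_{\rho})\) and tests isomorphism, matching the characterization of \(\equiv_{\Adec}\). The product approach is simpler for producing witnesses.

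The main obstacle I expect is to justify carefully that \(M_{\pi}\) is a genuine finite, complete DFA over \(\Sig\). This requires two checks. First, the output-block simulation must be total, because the \(\delta\) and \(\eta\) maps of each transducer are total functions. Second, the order component must treat markers and decisions correctly: input words lie in \(\Sig^{*}\), so \(\layer\) is defined letterwise, and the order automaton reads only input symbols, never emitted ones. Once those checks are made, the rest is standard.

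I will also note the resulting cost bound. BFS runs in time polynomial in \(|M_{\pi}|\cdot|M_{\rho}|\cdot|\Sig|\), and the shortest witness has length less than \(|M_{\pi}|\cdot|M_{\rho}|\), which in the worst case is exponential in the number \(k\) of components.
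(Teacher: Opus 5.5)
Your proposal is correct and takes essentially the paper's approach. You build the product of the compiled DFAs $M_{\pi}$ and $M_{\rho}$, make accepting exactly the states where one component accepts, decide equivalence by emptiness of reachable marked states, and extract a shortest distinguishing trace by breadth-first search; your checks on effectiveness and totality of the compilation and the witness-length bound are details the paper leaves implicit.
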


\begin{proof}
Build the product DFA of $M_{\pi}$ and $M_{\rho}$. Mark product states where exactly one component is accepting. Emptiness of the reachability set of marked states decides equivalence. Breadth-first search returns a shortest input trace reaching such a state.
\end{proof}

\section{Three levels of commutation}

\begin{definition}[Syntactic commutation]
Two transducers $T_i,T_j$ syntactically commute when
\[
T_i\circ T_j=T_j\circ T_i
\]
as functions $\Gam^{*}\to\Gam^{*}$.
\end{definition}

\begin{definition}[Local decision-congruence commutation]
Two adjacent transducers locally commute modulo $\Adec$ when
\[
\Adec(T_i(T_j(u)))=\Adec(T_j(T_i(u)))
\]
for every $u\in\Gam^{*}$, where equality means identical final accept/reject result.
\end{definition}

\begin{definition}[Full contextual equivalence]
Two adjacent transducers $T_i,T_j$ are contextually equivalent in a component library $\mathcal{K}$ when, for every prefix chain $U$ and suffix chain $V$ formed from the remaining components,
\[
\Adec(V(T_i(T_j(U(u)))))=\Adec(V(T_j(T_i(U(u)))))
\]
for every $u\in\Gam^{*}$.
\end{definition}

\begin{theorem}[Strict implication hierarchy]\label{thm:commhierarchy}
Syntactic commutation implies full contextual equivalence, and full contextual equivalence implies local decision-congruence commutation. Both implications are strict.
\end{theorem}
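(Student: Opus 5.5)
The two implications are direct substitutions; the real work is in the two strictness counterexamples.

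\emph{Syntactic commutation implies contextual equivalence.} Assume \(T_i\circ T_j=T_j\circ T_i\) as functions on \(\Gam^{*}\). Fix any prefix chain \(U\), suffix chain \(V\) and word \(u\). Apply the equality at the argument \(U(u)\), then apply \(V\) and \(\Adec\) to both sides. This gives the contextual identity, since equal words yield equal final results.

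\emph{Contextual equivalence implies local commutation.} I will take \(U\) and \(V\) to be the empty chains, which act as identity transducers. The contextual identity then reduces verbatim to the local condition. This requires the library's notion of ``chain formed from the remaining components'' to include the empty chain. I will state this convention explicitly, as the quantifier in the definition allows it.

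\emph{Strictness of the first implication.} I need transducers that are contextually equivalent but not syntactically commuting. Take \(\Adec\) to be the universal automaton accepting all of \(\Gam^{*}\). Then every contextual identity holds trivially. For the transducers, let \(T_i\) map \(a\mapsto am\) and \(T_j\) map \(a\mapsto ab\), both copying every other symbol, with \(a,b\in\Sig\) and \(m\in\Marks\). On input \(a\), \(T_i(T_j(a))=amb\) while \(T_j(T_i(a))=abm\), so they do not commute syntactically. If one prefers a nontrivial decision automaton, use the evidence projection instead. Let \(\Adec\) accept exactly when \(\markproj\) of the word lies in a fixed language \(K\), and keep the library to just \(\{T_i,T_j\}\) plus marker-preserving copies. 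Then the \(\Marks\)-projections of the two orders agree on every input, so the decisions agree in every context.

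\emph{Strictness of the second implication.} This is the step I expect to be the main obstacle: I need local equality that a context breaks. I will reuse the order-sensitive mechanism of Theorem~\ref{thm:http_sep}. Let \(T_i\) be a marker-inserting layer that emits \(\mF\) after an \(F\)-event \(e_F\). Let \(T_j\) be an order-sensitive parser layer that emits \(\mP\) on a parser event \(e_P\) only after having seen \(\mF\). Let \(\Adec\) reject exactly when the marker projection contains \(\mF\Marks^{*}\mP\). Locally, without any prefix, \(T_j\) placed first never sees \(\mF\). So I will design the pair so that the local compositions agree. One option makes \(T_j\) react only to a marker \(\mC\) produced by a third component \(U\), where \(T_i\) is chosen to suppress \(\mC\). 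Under this design, with empty context neither order produces the forbidden factor on any \(u\), because no \(\mC\) is ever present. With \(U\) prepended, inserting \(\mC\) before \(e_P\), the order \(T_j\) first yields \(\mP\) and rejection. The order \(T_i\) first erases \(\mC\), so no \(\mP\) appears and the trace is accepted.

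The remaining care is to verify that local agreement holds for \emph{all} \(u\in\Gam^{*}\), including inputs that already contain \(\mC\) or \(\mF\). Since inputs range over \(\Gam^{*}\), I will restrict \(T_j\)'s trigger to a dedicated marker that only \(U\) can produce. Alternatively, I will make \(T_j\) fire on a pattern that \(T_i\) always neutralizes in either order, adjusting the gadget until the local check is exhaustive. A concrete witness trace \(e_F\,e_P\) then exhibits the separation.
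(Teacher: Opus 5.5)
Your two implications are correct. Your first strictness witness is also correct, and simpler than the paper's order-insensitive two-marker acceptor: take the universal acceptor with $T_i\colon a\mapsto am$ and $T_j\colon a\mapsto ab$. Your optional nontrivial variant is safe only if the extra components are literal identities. A marker-monotone layer that emits $\mD$ on the factor $mb$ would already break it.

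The genuine gap is in the strictness of the second implication. A prefix chain $U$ can never separate local decision-congruence from contextual equivalence. Local decision-congruence quantifies over \emph{every} $u\in\Gam^{*}$, and $U(u)$ is itself such a word. So with $V$ empty, the contextual identity at $u$ is literally the local identity at the input $U(u)$, and any discrepancy your $U$ produces is already a local discrepancy. Your gadget shows this concretely on the raw input $e_F\,\mC\,e_P$:
\begin{itemize}
\item running $T_j$ first gives $e_F\,\mF\,e_P\,\mP$, which is rejected;
\item running $T_i$ first gives $e_F\,\mF\,e_P$, which is accepted.
\end{itemize}
So the pair fails the local condition and is not a counterexample. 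The proposed repair, a ``dedicated marker that only $U$ can produce,'' cannot exist. $U$ writes over $\Gam$, and every symbol of $\Gam$ may occur in a raw input. Adjusting the gadget cannot help either, as long as the separating context is a prefix.

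The missing idea is that the separation must come from a suffix chain $V$. Contextual equivalence amounts to local congruence with respect to every acceptor $\Adec\circ V$, and $V$ can turn an order difference that $\Adec$ ignores into a symbol that $\Adec$ reads. The paper does exactly this:
\begin{itemize}
\item the two compositions differ only in producing $ab$ versus $ba$;
\item $\Adec$ ignores $a,b$ and rejects words containing $\mD$, so the local condition holds on all of $\Gam^{*}$, since neither layer creates or deletes $\mD$;
\item a suffix layer, added to the library, emits $\mD$ precisely on the factor $ab$, which separates the orders.
\end{itemize}
You could reuse your own pair. Let $\Adec$ reject exactly the words containing $\mD$, and add a suffix layer $V$ that emits $\mD$ after the factor $mb$. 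On input $a$, $V(amb)$ is rejected and $V(abm)$ is accepted, while the local condition holds trivially.
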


\begin{proof}
If $T_iT_j=T_jT_i$ as functions, then inserting either composition inside any context $V(\cdot)U$ gives identical words, hence identical decisions. This proves the first implication. Full contextual equivalence includes the identity prefix and suffix contexts, so it implies local decision-congruence.

For strictness of the first implication, let $T_i$ insert marker $a$ and $T_j$ insert marker $b$ on a trigger symbol. The outputs $ab$ and $ba$ differ, so syntactic commutation fails. Let $\Adec$ accept exactly when both markers occur, independent of order, and let every context preserve both markers. Then the pair is contextually equivalent.

For strictness of the second implication, let $T_i$ and $T_j$ produce words $ab$ and $ba$ on a trigger symbol, and let $\Adec$ ignore $a$ and $b$ locally. Local decision-congruence holds. Add a suffix layer $V$ that emits $\mD$ precisely on factor $ab$ and emits nothing on $ba$, and let $\Adec$ reject $\mD$. Then the two orders differ under suffix context, so local decision-congruence is not full contextual equivalence.
\end{proof}

\section{Algorithms and complexity}

\begin{definition}[Exact contextual independence relation]
For a finite library $\mathcal{K}=\{T_1,\ldots,T_k\}$, define $I_{\mathrm{ctx}}\subseteq\mathcal{K}\times\mathcal{K}$ by putting $(T_i,T_j)\in I_{\mathrm{ctx}}$ exactly when $T_i$ and $T_j$ are full contextually equivalent in the sense above.
\end{definition}

\begin{lstlisting}[caption={Exact finite algorithm for contextual independence.}]
Input: finite transducer library K, order automaton A_ord, decision automaton A_dec
for each ordered pair (Ti,Tj) in K:
    independent = True
    for each prefix chain U over K \ {Ti,Tj} without repetition:
        for each suffix chain V over K \ {Ti,Tj} \ U without repetition:
            M1 = compile(A_ord, V o Ti o Tj o U, A_dec)
            M2 = compile(A_ord, V o Tj o Ti o U, A_dec)
            if inequivalent(M1, M2):
                independent = False
                store shortest witness
    output pair iff independent
\end{lstlisting}

\begin{theorem}[Termination and complexity of reorder analysis]\label{thm:alg_complexity}
For $k$ finite transducers, exact contextual independence terminates. For fixed compiled automata sizes, each equivalence check is polynomial in the product of the compared DFA sizes. The total exact enumeration is exponential in $k$ and bounded by $O(k^{2}(k-2)!\cdot p(N))$, where $N$ bounds the compiled DFA sizes and $p$ is the polynomial cost of DFA equivalence.
\end{theorem}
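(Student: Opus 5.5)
The proof will analyze the algorithm in Listing~2 directly. Termination comes first. The library $\mathcal{K}$ is finite, so there are at most $k(k-1)\le k^{2}$ ordered pairs $(T_i,T_j)$. For a fixed pair, the prefix chains $U$ are repetition-free sequences over $k-2$ components, and the suffix chains $V$ are repetition-free sequences over the components not used in $U$. Every loop therefore ranges over a finite set.

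Each loop body does two things. It compiles two arrangements, which is effective by Lemma~\ref{lem:compose} and Theorem~\ref{thm:compile}: it is a finite product construction over the order automaton, the composed transducer state and the decision automaton. It then runs one equivalence test, which is decidable by Theorem~\ref{thm:eqwitness}. So every iteration terminates, and the whole procedure terminates. The stored witness is the shortest distinguishing trace returned by the breadth-first search of Theorem~\ref{thm:eqwitness}. Correctness with respect to $I_{\mathrm{ctx}}$ is immediate: the loops enumerate exactly the contexts $(U,V)$ quantified over in the definition of full contextual equivalence. Quantification over all $u$ is replaced by language equality of the compiled DFAs, with order acceptance added as a common conjunct on both sides.

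For the per-check cost, suppose the compiled DFAs $M_1,M_2$ have at most $N$ states each. The product automaton of Theorem~\ref{thm:eqwitness} has at most $N^{2}$ states and $N^{2}|\Sig|$ transitions. Reachability of a state where exactly one component accepts is then linear in the size of the product. This gives a polynomial $p(N)$, with $|\Sig|$ treated as fixed. The hypothesis ``fixed compiled automata sizes'' lets us charge compilation into the same polynomial bound, or treat it as a constant.

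The main obstacle is the context count, and I would settle it first. For a fixed pair, each admissible $(U,V)$ with $U$ and $V$ disjoint is a sequence over the remaining $k-2$ components. A natural reading of the bound is that the intended contexts are the full arrangements: $U\cdot V$ is a permutation of $\mathcal{K}\setminus\{T_i,T_j\}$, with the split position marking where the adjacent pair is inserted. Under that reading there are $(k-2)!$ orderings, each with at most $k-1$ split points, giving $O(k\,(k-2)!)$ contexts. Combined with the $k^{2}$ pairs this yields $O(k^{3}(k-2)!)$, slightly above the claimed bound. To recover the stated $O(k^{2}(k-2)!\cdot p(N))$, I would use the fact that the splitting point is already determined by a complete arrangement. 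Each context is then a single permutation of the remaining $k-2$ components with the pair placed in one designated slot. Alternatively, the extra factor of $k$ can be absorbed, either by noting that $(k-1)(k-2)!=(k-1)!$ and bounding $k(k-1)!$ within the same exponential order, or by grouping the splits per permutation into one compiled check.

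If the enumeration really admits partial, non-exhaustive $U$ and $V$, the count becomes $\sum_{r}\binom{k-2}{r}r!\cdot(\text{suffixes})$. That sum is still $O(e\,(k-2)!)$ per side and exponential in $k$, so the qualitative claim survives. I would make the context convention explicit before fixing constants.
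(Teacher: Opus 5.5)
Your termination argument and per-check cost are exactly what the paper's proof uses: finitely many pairs and contexts, compilation by Theorem~\ref{thm:compile}, and reachability in a product of at most \(N^{2}\) states. The gap is in the last step. None of your attempts to reach \(O(k^{2}(k-2)!\,p(N))\) works.

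Placing the pair in one designated slot per permutation of the other \(k-2\) components omits contexts that full contextual equivalence quantifies over. With the slot at the front, for instance, you only ever test \(U=\eps\). The procedure would then declare too many pairs independent, so it would no longer compute \(I_{\mathrm{ctx}}\), which is the soundness premise of Theorem~\ref{thm:por}. A complete arrangement does fix the split, but there are \((k-1)!\) arrangements with \(T_i\) immediately before \(T_j\), not \((k-2)!\).

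Writing \((k-1)(k-2)!=(k-1)!\) only restates the larger bound \(O(k^{2}(k-1)!\,p(N))\). That exceeds the claimed bound by the unbounded factor \(k-1\), so it is not ``slightly above''. Grouping the \(k-1\) splits into one check, e.g.\ behind a selector symbol, gives automata with about \((k-1)N\) states and a reachable product about \(k-1\) times larger. The factor reappears inside the cost, and \(N\) no longer bounds the compared DFAs.

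Finally, for the listing as written (non-exhaustive \(U,V\)), \(O(e\,(k-2)!)\) is a count per side. For each ordered pair, however, the loops enumerate disjoint pairs \((U,V)\), and there are \((k-2)!\sum_{j=0}^{k-2}(k-1-j)/j!\) of them. This lies between \((k-1)!\) and \(e\,(k-1)!\); the \(j=0\) term alone is the \((k-1)!\) exhaustive contexts.

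So your diagnosis is right and the repair is not. The enumeration performs \(\Theta(k^{2}(k-1)!)=\Theta(k^{3}(k-2)!)\) equivalence checks, so the stated bound cannot be proved for it. The paper's proof has the same hole: ``multiplying by the number of enumerated contexts gives the bound'' never counts the contexts, and the count gives \(O(k^{2}(k-1)!\,p(N))\).

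What you can honestly prove is termination, polynomial cost per check, and a total exponential in \(k\) with the corrected bound \(O(k^{2}(k-1)!\,p(N))\). State explicitly that \((k-2)!\) undercounts by the \(k-1\) insertion slots, rather than trying to force the stated bound.

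A smaller point: your correctness remark is not immediate. The compiled DFAs include \(\Aord\) and read \(\Sig\)-words, so \(\Lang(M_1)=\Lang(M_2)\) tests agreement only on \(\Ord\)-admissible inputs. Full contextual equivalence quantifies over every \(u\in\Gam^{*}\), so adding order acceptance as a common conjunct weakens the condition rather than preserving it. The theorem does not need correctness, so either drop that claim or qualify it.
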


\begin{proof}
There are finitely many ordered pairs and finitely many prefix/suffix chains over the remaining finite library. Each chain compiles to a DFA by Theorem~\ref{thm:compile}. DFA equivalence is decided by reachability in the product graph, polynomial in the number of product states. Multiplying by the number of enumerated contexts gives the bound.
\end{proof}

\begin{theorem}[Sound partial-order reduction]\label{thm:por}
Let $I\subseteq I_{\mathrm{ctx}}$ be any computed contextual independence relation. If two permutations differ by a sequence of adjacent swaps from $I$, then they belong to the same decision-Nerode reorder class.
\end{theorem}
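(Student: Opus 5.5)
The plan is induction on the number of adjacent swaps, so the whole argument reduces to the single-swap case. Let $\pi=\pi_0,\pi_1,\ldots,\pi_n=\rho$ be the given sequence of permutations. Each $\pi_{t+1}$ is obtained from $\pi_t$ by exchanging two adjacent components $T_i,T_j$ with $(T_i,T_j)\in I$. Since $\equiv_{\Adec}$ is an equivalence relation (Theorem~\ref{thm:canonical}), transitivity gives $\pi\equiv_{\Adec}\rho$ once we show $\pi_t\equiv_{\Adec}\pi_{t+1}$ for each $t$. The case $n=0$ is trivial by reflexivity.

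For a single swap, write the chain of $\pi_t$ as $V\circ T_i\circ T_j\circ U$. Here $U$ is the composition of the components preceding the swapped pair and $V$ the composition of those following it. Because $\pi_t$ is a permutation, $U$ and $V$ are chains over $\mathcal{K}\setminus\{T_i,T_j\}$ without repetition and with disjoint component sets. These are exactly the contexts quantified over in the definition of full contextual equivalence; possibly one or both is the empty chain, i.e.\ the identity. The chain of $\pi_{t+1}$ is then $V\circ T_j\circ T_i\circ U$.

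Since $I\subseteq I_{\mathrm{ctx}}$, the pair is contextually equivalent. Therefore, for every $u\in\Gam^{*}$,
\[
\Adec(V(T_i(T_j(U(u)))))=\Adec(V(T_j(T_i(U(u))))).
\]
To pass to the compiled recognizers, note that $M_{\pi_t}$ and $M_{\pi_{t+1}}$ share the same order automaton. For an input $w\in\Sig^{*}$, the order component depends only on $\layer(w)$, and the decision component is $\Adec$ run on the compiled output. Lemma~\ref{lem:compose} and Theorem~\ref{thm:compile} identify that output with the functional composition of the chain applied to $w$. Specializing the displayed identity to $u=w\in\Sig^{*}\subseteq\Gam^{*}$, both recognizers accept $w$ iff $\layer(w)\in\Ord$ and the common decision holds. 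Hence $\Lang(M_{\pi_t})=\Lang(M_{\pi_{t+1}})$.

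The main obstacle is matching conventions. Contextual equivalence is stated with $T_i\circ T_j$ in the middle, while a swap may present the pair in either order. This is harmless: the defining identity is symmetric in its two sides, and the ordered pair $(T_j,T_i)$ yields the same condition. We also need that the contexts arising from a permutation, namely the prefix and suffix sets being disjoint and non-repeating, lie within the quantified family, which holds by construction. Finally, the compiled recognizer must see exactly the composed function, which is what Theorem~\ref{thm:compile} provides.
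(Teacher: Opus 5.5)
Your proposal is correct and follows the paper's own argument. Each adjacent swap from $I\subseteq I_{\mathrm{ctx}}$ preserves the recognized language, because the prefix and suffix of the permutation form an admissible context, and chaining the swaps by transitivity of $\equiv_{\Adec}$ gives decision-congruence. You also spell out what the paper leaves implicit: how $U$ and $V$ are read off the permutation, why the defining identity is symmetric in the swapped pair, and how equal $\Adec$ verdicts on $\Gam^{*}$ give equal languages of the recognizers $M_{\pi}$, which share the same order automaton.
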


\begin{proof}
Each adjacent swap in $I$ preserves the final decision language under every surrounding context by definition of contextual independence. Applying the swaps one by one preserves the recognized language at each step. Hence the initial and final permutations are decision-congruent.
\end{proof}

\begin{corollary}[Approximation discipline]
Syntactic commutation gives a cheap sound under-approximation of contextual independence. Local decision-congruence alone is not sound for partial-order reduction unless closed under all suffix contexts.
\end{corollary}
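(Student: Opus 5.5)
The statement is the Approximation discipline corollary, and it has two parts. I would prove each part as a direct consequence of Theorem~\ref{thm:commhierarchy} and Theorem~\ref{thm:por}.

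For the first sentence, let \(I_{\mathrm{syn}}\) be the set of pairs \((T_i,T_j)\) with \(T_i\circ T_j=T_j\circ T_i\). The first implication of Theorem~\ref{thm:commhierarchy} gives \(I_{\mathrm{syn}}\subseteq I_{\mathrm{ctx}}\), so \(I_{\mathrm{syn}}\) is an under-approximation. Soundness then follows by instantiating Theorem~\ref{thm:por} with \(I=I_{\mathrm{syn}}\): any two permutations connected by adjacent swaps from \(I_{\mathrm{syn}}\) are decision-congruent.

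For the qualifier ``cheap'', I would argue that syntactic commutation is a single test independent of context. It asks whether the two composites \(T_i\circ T_j\) and \(T_j\circ T_i\), each compiled by Lemma~\ref{lem:compose}, define the same function. That is one equivalence check between two deterministic sequential transducers, done by a product construction that looks for a reachable state where emitted outputs diverge. The check is polynomial in the composed state spaces. By contrast, the exact test of Theorem~\ref{thm:alg_complexity} ranges over \((k-2)!\)-many contexts. The main care needed is this: sequential transducers can emit the same final output with different delays, so the divergence check must track the pending output difference. Because \(\eta\) has bounded output, the lag is bounded for equivalent transducers and the check stays finite. I would state this as the standard decidability of functional equivalence for deterministic sequential transducers rather than reprove it.

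For the second sentence, I would turn the strictness witness of Theorem~\ref{thm:commhierarchy} into a failure of partial-order reduction. Take the library \(\{T_i,T_j,V\}\) from that proof: \(T_i,T_j\) produce \(ab\) versus \(ba\) on the trigger, \(\Adec\) ignores \(a\) and \(b\), and \(V\) emits \(\mD\) exactly on the factor \(ab\). Then \((T_i,T_j)\) is locally decision-congruent. However, the permutations \(V\circ T_i\circ T_j\) and \(V\circ T_j\circ T_i\) differ by one adjacent swap of that pair and recognize different languages, since the trigger is rejected under one and accepted under the other. Theorem~\ref{thm:eqwitness} exhibits the trigger as the distinguishing trace. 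So using local congruence in place of \(I\) in Theorem~\ref{thm:por} would merge inequivalent classes.

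The ``unless'' clause is immediate: closing local congruence under all prefix and suffix contexts is exactly the definition of full contextual equivalence, so the closed relation lies inside \(I_{\mathrm{ctx}}\) and Theorem~\ref{thm:por} applies. The main obstacle is making the counterexample concrete and well-typed. I would fix the trigger as a single event \(t\in\Sig\), choose \(T_i\) to rewrite \(t\mapsto ab\) and \(T_j\) to swap \(ab\mapsto ba\), and check that \(V\) is a valid two-state sequential transducer.
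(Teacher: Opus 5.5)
Your proposal is correct and follows the paper's proof, which derives the first sentence from the first implication of Theorem~\ref{thm:commhierarchy} and the second from that theorem's second strictness construction; your explicit appeal to Theorem~\ref{thm:por}, your justification of ``cheap'', and your concrete instantiation of the counterexample are sound additions the paper does not spell out. One simplification: \(\T(x)\) is the output already emitted on each prefix and there is no final-output function, so \(T_i\circ T_j=T_j\circ T_i\) on all of \(\Gam^{*}\) forces identical output blocks at every reachable product state, and no delay tracking is needed.
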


\begin{proof}
The first statement follows from Theorem~\ref{thm:commhierarchy}. The second is exactly the second separating construction in that theorem.
\end{proof}

\section{State blow-up}

\begin{theorem}[Exponential lower bound for compiled recognition]\label{thm:lower}
There is a family of \(k\) layer transducers and a decision automaton such that any DFA recognizing the compiled layer-order language has at least \(2^{k}\) states.
\end{theorem}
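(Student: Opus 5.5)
We will adapt the classical ``$k$-th symbol from the end'' witness, where an NFA with $k+1$ states yields a minimal DFA with $2^{k}$ states. Here the layered chain itself supplies the memory, and the decision automaton stays tiny. Take an input alphabet containing two data symbols $0,1$ and a separator symbol $\#$, all tagged with a single layer so that the universal order automaton suffices and $\layer$ imposes no constraint. The goal is a system whose recognition language $\Lang_{\pi}(\Ssys)$ has at least $2^{k}$ Myhill--Nerode classes. By Theorem~\ref{thm:compile} that language is regular, and by uniqueness of minimal DFAs (as used in Theorem~\ref{thm:canonical}) the Nerode index lower-bounds every recognizing DFA.

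\textbf{Construction.} Layer $\T_i$, for $i=1,\ldots,k$, is a two-state delay transducer. It outputs the symbol it read at the previous step and stores the current one, so it emits a one-symbol delay of its input. In the composed chain of $k$ such layers, the symbol emerging from layer $k$ at time $t$ is the input read at time $t-k$. We also need the last $k$ inputs to stay observable, so we use a simpler variant. Layer $\T_i$ watches its own position modulo nothing. It stores the bit it currently ``holds'' and, on reading $\#$, emits a marker $m$ exactly when its held bit equals $1$ and it is the $i$-th delay stage. The cleanest option, which we would commit to, is this: layer $\T_i$ is a one-bit register that shifts (emitting the evicted bit). Let $\Adec$ accept iff the first output symbol after the final $\#$-triggered flush is $1$. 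Concretely, the chain realizes the shift register of length $k$, and on $\#$ every layer flushes its held bit in order. $\Adec$ then checks the bit that was read $k$ positions before the $\#$. The compiled language is therefore $L_k=\{x\#:x\in\{0,1\}^{*},\ |x|\ge k,\ x_{|x|-k+1}=1\}$, possibly conjoined with a trivial regular side condition.

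\textbf{Lower bound.} For distinct $u,v\in\{0,1\}^{k}$, choose a position $j$ where they differ, say $u_j=1$ and $v_j=0$. Appending $0^{j-1}\#$ makes $u0^{j-1}\#\in L_k$ and $v0^{j-1}\#\notin L_k$. So the $2^{k}$ words of length $k$ lie in pairwise distinct Nerode classes, and any DFA for $L_k$ has at least $2^{k}$ states. Each layer and $\Adec$ have $O(1)$ states, so the blow-up is genuinely caused by composition.

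\textbf{Main obstacle.} The delicate step is the exact bookkeeping of the delay chain. We must verify by induction on the prefix length that, after reading $x$, layer $i$ holds $x_{|x|-i+1}$, so the register contents equal the last $k$ bits. We must also verify that the flush on $\#$ emits these bits in a fixed order that a constant-size $\Adec$ can read off. I would prove this invariant first, handling the start-up phase $|x|<k$ with a ``blank'' register value that $\Adec$ treats as rejecting.
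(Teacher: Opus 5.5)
Your route differs from the paper's. As written, it breaks at exactly the step you deferred to the ``main obstacle'': how a constant-size $\Adec$ reads off the flushed bit. Take the natural reading of your flush, where on $\#$ each layer emits its held bit and then forwards $\#$. Then every $\T_i$ acts on a completed word as the identity up to a leading blank, and the whole chain maps $x\#$ to $\bot^{k}x\#$. Nothing in the output word separates the bits streamed out while reading $x$ from the bits flushed on $\#$. So ``the first output symbol after the $\#$-triggered flush'' is a property of the run, not of the output word, and no DFA over $\Gam^{*}$ can test it. Any decision automaton that makes the compiled language agree with $L_k$ must itself check the $k$-th data symbol before $\#$, and by your own Nerode argument such an automaton has at least $2^{k}$ states. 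The theorem as literally stated would still follow, but the $k$ layers would be idle. Your claim that $\Adec$ has $O(1)$ states, with the blow-up caused by composition, would be false. The sentences about a layer that ``watches its own position modulo nothing'' are abandoned drafts and should be deleted.

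The repair is to forward the separator \emph{before} the flushed bit. Before the first $\#$, let $\T_i$ delay data symbols by one step, emitting $\bot$ on its first symbol. On the first $\#$, let it emit $\#$ followed by its held symbol, and copy everything afterwards. Each layer then moves the first $\#$ one data position to the left. Induction on $i$ shows that for $|x|=n\ge k$ the chain outputs $\bot^{k}x_1\cdots x_{n-k}\,\#\,x_{n-k+1}\cdots x_n$. Now let $\Adec$ be a four-state automaton that accepts iff the symbol immediately after the first $\#$ is $1$. This gives $x\#\in\Lang_{\pi}(\Ssys)$ iff $x_{n-k+1}=1$ whenever $|x|\ge k$. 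That is all your Nerode argument with suffixes $0^{j-1}\#$ uses, so the start-up phase needs no special handling. Equivalently, let $\T_1$ insert a marker in front of its flush, and let later layers pass markers through without touching their registers.

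With this fix your construction is stronger than the paper's. The paper lets $\T_i$ turn a trigger $a_i$ into $m_i$ and lets $\Adec$ accept $A\#q_i$ iff $m_i$ occurred before $\#$. It then separates the $2^{k}$ trigger subsets by the suffixes $\#q_i$. There $\Adec$ must itself remember a subset of $\{m_1,\ldots,m_k\}$, so it already has $2^{k}$ states, and the bound is inherited from the decision automaton. Your repaired version uses $k$ four-state layers and a constant-size $\Adec$. It therefore shows that the exponential product blow-up of Theorem~\ref{thm:compile} is unavoidable and comes from composition itself. Both proofs finish with the same Myhill--Nerode count.
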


\begin{proof}
Let the input alphabet contain triggers \(a_1,\ldots,a_k\), a separator \(\#\), and queries \(q_1,\ldots,q_k\). Layer \(T_i\) emits marker \(m_i\) whenever it sees \(a_i\) and otherwise copies input. The decision automaton accepts words of the form
\[
A\,\#\,q_i
\]
exactly when marker \(m_i\) occurred before \(\#\), and ignores the order and multiplicity of markers before \(\#\). Thus a prefix over the trigger alphabet determines a subset \(S\subseteq\{1,\ldots,k\}\), namely the set of emitted markers seen before \(\#\).

For two distinct subsets \(S\neq S'\), choose \(i\in S\triangle S'\). The suffix \(\#q_i\) accepts exactly one of the two prefixes. Hence the two prefixes are Myhill--Nerode distinguishable. There are \(2^k\) subsets, so the compiled recognition language has at least \(2^k\) equivalence classes and every recognizing DFA has at least \(2^k\) states.
\end{proof}

\section{Order separation beyond possibility}

\begin{theorem}[Unbounded reorder separation]\label{thm:unbounded}
For every $k\geq 2$, there exists a fixed library of $k$ finite transducers and one decision automaton such that different layer orders realize at least $k$ distinct recognized languages.
\end{theorem}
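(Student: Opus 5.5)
The construction uses a library of $k$ transducers in which the position of one distinguished layer among the others determines how much evidence it sees, so that $k$ relative positions give $k$ distinct languages. Let $\Sig$ contain triggers $a_1,\ldots,a_{k-1}$ and a probe $p$, and let $m_1,\ldots,m_{k-1},\mP\in\Marks$; if $\Marks$ is too small, we take a larger finite marker alphabet, since nothing in the framework depends on its size. For $i<k$, layer $T_i$ emits $a_i m_i$ on $a_i$ and copies every other symbol. It is marker-monotone and has one state. The distinguished layer $T_k$ has a counter state $j\in\{0,\ldots,k-1\}$ recording how many distinct markers $m_i$ it has seen, saturating at $k-1$. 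On $p$ it emits $p\,\mP^{\,j}$, that is, $j$ copies of $\mP$ after $p$, and it copies all other symbols. This is finite-output, so $T_k$ is a deterministic sequential security transducer. The order automaton is universal, and $\Adec$ accepts a word exactly when its marker projection contains at least one $\mP$. This is regular, and the decision automaton has two states.

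We consider the $k$ arrangements $\pi_r$ in which $T_k$ sits at position $r+1$, preceded by $T_1,\ldots,T_r$ and followed by $T_{r+1},\ldots,T_{k-1}$, for $r=0,\ldots,k-1$. The first key step is a causal-visibility computation. By the definition of $\T_\pi$ and the compiled transducer of Lemma~\ref{lem:compose}, at each prefix $T_k$ sees exactly the markers $m_i$ with $i\le r$ emitted before $p$. The later layers only copy $p$ and $\mP$, and since they are marker-monotone, Theorem~\ref{thm:preservation} guarantees that the $\mP$'s survive to $\Adec$. The second step is to notice that ``at least one $\mP$'' is too coarse. With that decision automaton, $\Lang_{\pi_r}$ is the set of words containing some $a_i$ with $i\le r$ before some $p$, and these sets are already nested and strictly increasing in $r$. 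This gives $k$ distinct languages directly, so the counter can be dropped: $T_k$ may simply emit $\mP$ on $p$ whenever some $m_i$ has been seen.

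The final step is to exhibit witnesses. For $r\ge1$, the word $w_r=a_r\,p$ lies in $\Lang_{\pi_s}$ exactly when $s\ge r$. Together with $w_0=p$, which lies in none of them, this shows that $\Lang_{\pi_0}\subsetneq\Lang_{\pi_1}\subsetneq\cdots\subsetneq\Lang_{\pi_{k-1}}$. These are $k$ pairwise distinct recognized languages, and the distinctness can be checked mechanically with Theorem~\ref{thm:eqwitness}.

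The main obstacle is verifying the visibility claim precisely. In $\pi_r$, the marker $m_i$ for $i>r$ is created only after $T_k$ has already emitted its output block on $p$, so it cannot retroactively trigger $\mP$. This follows from the inductive definition of $\T(xa)$: each layer's block for a symbol depends only on the state reached on the prefix. I would write out the composed output block on $a_i$ and on $p$ by induction on the chain length, as in the proof of Lemma~\ref{lem:compose}. This is the same argument used in Theorem~\ref{thm:http_sep}, repeated $k-1$ times.
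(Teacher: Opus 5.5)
Your proof is correct and follows essentially the same route as the paper's. In both, a single observing layer is placed among marker-emitting layers, its position fixes which markers it can see, and an input carrying the trigger for a marker on one side of it is the distinguishing trace. Your version is tighter than the paper's sketch: you use distinct triggers $a_i$, a library of exactly $k$ components, and explicit witnesses $a_r\,p$ giving the strict chain $\Lang_{\pi_0}\subsetneq\cdots\subsetneq\Lang_{\pi_{k-1}}$, whereas the paper uses a single trigger $a$ and leaves the scanner's relation unspecified.
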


\begin{proof}
Let the input contain a trigger symbol $a$. Layer $T_i$ emits marker $m_i$ on $a$ and copies all other symbols. Add one scanner layer $S$ that emits $\mD$ when it observes a specified marker before another specified marker, and let the decision automaton reject exactly words containing $\mD$. By choosing the specified relation for the scanner, the language depends on which marker appears before the scanner in the order. Varying the position of $S$ across $k$ placements yields $k$ distinct marker-visibility profiles. For two different placements, choose an input containing the trigger for a marker that lies before $S$ in one order and after $S$ in the other. The scanner emits $\mD$ in exactly one arrangement, giving a distinguishing trace.
\end{proof}

This result classifies reorderability through visibility profiles. The separating mechanism is causal observation: a layer reacts to markers already produced by earlier layers.

\section{Pipeline consequences}

The grounding section makes the central reading explicit, and it is worth stating as a thesis. A desynchronization attack is a witness that two layer orders recognize different languages on the same input. The separating trace of Theorem~\ref{thm:http_sep} is that witness in the worked model, and Proposition~\ref{prop:desync_sep} identifies it with the CL.TE/TE.CL boundary-disagreement condition---an identification proved against a concrete parser-pair semantics by Theorem~\ref{thm:adequacy}, with exploitability and impact separated from disagreement as strictly stronger conditions by Proposition~\ref{prop:strata}. The boundary between a safe pipeline and an exploitable one is the boundary the enforceability theorem draws: a policy that rejects at the first bad marker prefix is faithfully enforceable online, while a policy that defers judgment until later evidence arrives is recognizable but not enforceable, which is the formal form of a front-end committing to a framing before the back-end's interpretation exists. Evidence discipline is the third axis: a marker must survive every later layer to be judged, and the HTTP/2-downgrade family is precisely the failure of that survival at the canonicalization layer.

The operational distinction is between recognition---what a completed trace means---and enforcement---what can be safely emitted before the future is known. Layer order semantics gives both languages and states exactly when they coincide. When they diverge, the divergence identifies the attack window.

\section{Share-alike marker semantics}

The release discipline at the beginning of the paper is an instance of the marker-preservation theory. Attribution, share-alike propagation, and certificate preservation are persistent markers over derivative traces, and a public derivative is admissible exactly when its release-marker projection satisfies the certificate language. The citation record \citep{alpay260607884} supplies the certificate marker used by this manuscript.

\begin{definition}[Derivative certificate language]
Let \(D\subseteq(\Gam\cup\mathcal{R})^{*}\) be the language of public derivative traces whose release projection contains \(\Cert\), contains \(\BY\) before public publication, and contains \(\SA\) before every derivative publication marker \(\Der\). The complement of \(D\) is the certificate-violation language.
\end{definition}

\begin{theorem}[Certificate enforcement by marker monotonicity]\label{thm:cert_enforce}
If derivative-processing layers are marker-monotone over \(\mathcal{R}\) and the decision automaton rejects the certificate-violation language, then every accepted derivative trace preserves attribution, share-alike propagation, and the certificate marker.
\end{theorem}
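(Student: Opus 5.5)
The plan is to argue directly from the definition of the certificate language \(D\) and the decision automaton's rejection of its complement, using Theorem~\ref{thm:preservation} only to explain why the required markers are actually present. Fix an accepted derivative trace \(d\), i.e.\ the output of the derivative-processing chain on some input, which the decision automaton accepts. Since the decision automaton rejects every word of the certificate-violation language \((\Gam\cup\mathcal{R})^{*}\setminus D\), acceptance of \(d\) gives \(d\in D\) by contraposition. Unfolding the definition of \(D\), the release projection of \(d\) contains \(\Cert\), contains \(\BY\) before public publication, and contains \(\SA\) before every \(\Der\). These three clauses are exactly the statements ``attribution preserved'', ``share-alike propagated'' and ``certificate marker preserved'' at the level of the final trace.

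To make the word \emph{preserves} meaningful relative to the input, I will add a second step. Let the release projection be the analogue of \(\markproj\) onto \(\mathcal{R}\). Marker monotonicity over \(\mathcal{R}\) is the hypothesis that each layer's release projection of its input is a subsequence of that of its output. By the argument of Lemma~\ref{lem:markercomp}, namely transitivity of subsequence inclusion, the whole chain is marker-monotone over \(\mathcal{R}\). Theorem~\ref{thm:preservation} then says that any \(\BY\), \(\SA\), \(\Cert\) present in the source trace, or generated by an earlier layer, survive in the same relative order into \(d\). So the markers witnessing membership in \(D\) include those originating upstream, and no layer erased or reordered them.

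The main obstacle is the ordering clauses. ``\(\BY\) before public use'' and ``\(\SA\) before every \(\Der\)'' are not plain subsequence facts, because a monotone layer may insert new \(\Der\) or publication markers, and possibly place them before the surviving \(\SA\). Monotonicity alone does not rule this out. I expect to discharge this by leaning on the second hypothesis rather than the first: the ordering is certified by the final decision automaton checking \(d\in D\), not inferred from monotonicity. Monotonicity's role is then only to guarantee that the certified markers are the genuine upstream ones and were not silently dropped. I would state this division of labor explicitly, so that the proof does not overclaim that monotonicity by itself preserves admissibility.
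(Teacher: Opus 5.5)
Your proposal is correct and takes essentially the paper's route: the paper's proof also concludes that an accepted trace avoids the certificate-violation language and therefore lies in \(D\), and it invokes marker monotonicity only to say that \(\BY\), \(\SA\), \(\Cert\) survive every derivative-processing layer. The paper additionally notes that the violation language is regular, so such a decision automaton can be synthesized via Theorem~\ref{thm:synth}; this is not needed for the implication itself, and your explicit split (ordering certified by the decision automaton, marker survival by monotonicity) is a sound refinement of the same argument.
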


\begin{proof}
The certificate-violation language is regular because it is a finite Boolean combination of marker-order conditions. The decision automaton can therefore be synthesized by Theorem~\ref{thm:synth}. Marker monotonicity preserves \(\BY\), \(\SA\), and \(\Cert\) through every derivative-processing layer. Any accepted trace avoids the violation language by construction, so it satisfies the derivative certificate language.
\end{proof}

\section{Related work}

Security automata provide the classical monitor model for enforceable safety properties \citep{schneider2000enforceable}. Edit automata extend monitor behavior through insertion, suppression, and rewriting \citep{ligatti2005edit}. Hamlen, Morrisett, and Schneider classify enforcement mechanisms by computability and power \citep{hamlen2006computability}. Runtime verification develops the operational theory of online trace checking \citep{barringer2004rule,leucker2009brief,bauer2011runtime}. The present paper factors finite-output editing into ordered named layers and classifies reorder equivalence under a decision automaton; its release-marker discipline follows the certified public-use convention cited above \citep{alpay260607884}.

The attack lineage is as direct as the formal one. Linhart, Klein, Heled, and Orrin introduced request smuggling in 2005 as a parsing disagreement between chained HTTP devices \citep{watchfire2005smuggling}. Kettle's 2019 research revived the class, demonstrated impact against production infrastructure, and codified the CL.TE/TE.CL/TE.TE taxonomy of front-end/back-end disagreement \citep{kettle2019desync}; the HTTP/2-downgrade work of 2021 showed that a canonicalizing rewrite reintroduces the very ambiguity the newer protocol had removed \citep{kettle2021http2,lerner2021http2}; and the 2022 client-side desync work carried the disagreement into victim browsers with disclosures against shipped proxies \citep{kettle2022browser}. These sources characterize individual mechanisms and their exploitation; what they share, and what the present paper isolates, is a single structural cause---a later layer reacting only to evidence earlier layers forwarded. The contribution here is the formal account of that cause: the separating trace and the enforceability theorem classify which orders of a fixed component library desynchronize, which orders detect the disagreement, and which orders admit a faithful online boundary enforcer.

\section{Conclusion}

Layer order semantics turns a pipeline into a mathematical object: an ordered chain of trace editors observed by a decision automaton. Its organizing claim is that HTTP request smuggling is an instance of the theory: a desynchronization attack witnesses that two layer orders recognize different languages, and the worked model's separating trace coincides, under a concrete deterministic parser-pair semantics, with the boundary-disagreement condition of the documented CL.TE, TE.CL, TE.TE, and HTTP/2-downgrade families. Detection, exploitability, and impact form a refinement chain over that disagreement. The strengthened theory gives four boundaries. Regular recognition strictly contains faithful enforcement; edit automata and layer-order chains are equivalent as monolithic editors while layer order semantics classifies named decomposition and reordering; abstraction transfers concrete security coverage exactly when violating concrete traces cannot collide with accepted abstract traces; and reorder reduction is sound at the contextual level.

The result is a precise cybersecurity semantics for ordered layers. Evidence is born at one layer, transformed by later layers, preserved or erased by marker discipline, judged by a synthesized decision automaton, and classified into canonical reorder classes by minimized recognizers. The theory is finite-state throughout, gives witnesses when orders differ, and states exactly which regular marker policies can be recognized, edited, synthesized, and faithfully enforced.

\end{document}